\newcommand{\bi}{\begin{itemize}}
\newcommand{\ei}{\end{itemize}}
\newcommand{\p}{\partial}
\def\be{\begin{equation}}
\def\ee{\end{equation}}
\def\bn{\begin{enumerate}}
\def\en{\end{enumerate}}
\newcommand{\nn}{\nonumber}
\newcommand{\bea}{\begin{eqnarray}}
\newcommand{\eea}{\end{eqnarray}}
\newcommand{\ba}{\begin{array}}
\newcommand{\ea}{\end{array}}
\newcommand{\bl}{\begin{align}}
\newcommand{\el}{\end{align}}
\newcommand{\g}{\gamma}
\newcommand{\kap}{\kappa}
\newcommand{\lam}{\lambda}
\newcommand{\ph}{\varphi}
\renewcommand{\th}{\theta}
\newcommand{\ep}{\varepsilon}
\newcommand{\eps}{\epsilon}
\newcommand{\bs}{\begin{subequations}}
\newcommand{\es}{\end{subequations} \noindent}
\newcommand{\half}{\tfrac{1}{2}}
\theoremstyle{thmstyleone}%
\newtheorem{theorem}{Theorem}
\newtheorem{proposition}[theorem]{Proposition}%
\theoremstyle{thmstyletwo}%
\newtheorem*{remark}{Remark}%
\newtheorem{corollary}{Corollary}
\theoremstyle{thmstylethree}%
\begin{document}

\title[Study on the BKL scenario]
{A study on the Belinski-Khalatnikov-Lifshitz scenario through
quadrics of kinetic energy}


\author*{Piotr P. Goldstein}\footnote{email: \texttt{piotr.goldstein@ncbj.gov.pl}}



\affiliation{Theoretical Physics Division National Centre for
Nuclear Research, Pasteura 7, Warsaw, Poland}


\date{\today}
\begin{abstract}
A detailed description of the asymptotic behaviour in the
Belinski-Khalatnikov-Lifshitz (BKL) scenario is presented through
a simple geometric picture illustrating the geometry of their
ordinary differential equations (ODE), which describe a
neighbourhood of the cosmic singularity. The Lagrangian version of
the dynamics governed by these equations is described in terms of
trajectories inside a conical subset of the corresponding space of
the generalised velocities. The calculations confirm that the
initial conditions of decreasing volume inevitably result in
eventual total collapse, while oscillations along paths reflecting
from a hyperboloid, similar to those predicted by Kasner's
solutions, occur on the way. The exact solution, found in our
previous work, proves to be the only one that shrinks to a point
along a differentiable path. Therefore, its instability means that
the collapse is always chaotic. It is also shown that the BKL
equations are not satisfied by the Kasner solutions exactly, even
in the asymptotic regime, although the precision of their
approximation may be high.
\end{abstract}
\maketitle
\tableofcontents

\section{Introduction}\label{sec1}
Application of the Einstein equations to cosmology posed questions
whether a cosmic singularity follows from these equations and (if
the answer be positive) what properties has the universe described
by their solution in the neighbourhood of the singularity. The
problem set by E.M. Lifshitz and I.M. Khalatnikov \cite{LK}, soon
joined by V.A. Belinski, set additional requirements for the
solution to be worth consideration: it should develop from a set
of nonzero measure on the manifold of initial conditions which has
the proper dimensionality, i.e., depends on the proper number of
arbitrary parameters. Moreover, the singularity should be physical
reality, rather than a result of simplifying assumptions (of a
model or a special frame of reference).

The first question concerns the sheer existence of such a
singularity. The initial approach in \cite{LK} gave the negative
answer to this question. Namely, the authors demonstrated
possibility of transformation to the synchronous frame of
reference (in which time is the proper time at each point) and
proved that all solutions of the required kind, whose the
determinant of the spatial metric tensor vanishes at a finite
time, are fictitious because their singularity can vanish in other
reference systems. However, later R. Penrose \cite{Penrose} and S.
Hawking \cite{Hawking} proved existence of a singularity
independent of the frame of reference, first for the case of a
collapsing star \cite{Penrose}, then for a class of universe
models \cite{Hawking}. This made the authors of \cite{LK}
reconsider their assertion. In the article of 1970 \cite{BKL}, BKL
considered possible generalisations of Kasner's model of the
homogeneous universe without matter.

The original Kasner's solutions describe the Euclidean metric
whose time dependence reads \cite{MTW}
\be\label{Kasner}
dl^2=t^{2p_1}dx_1^2+t^{2p_2}dx_2^2+t^{2p_3}dx_3^2
\ee
where
\bs
\be\label{p-cond1}
p_1+p_2+p_3=1 ~\text{ and}
\ee
\be\label{p-cond2}
p_1^2+p_2^2+p_3^2=1.
\ee
\es
Since one of these exponents has to be negative, Kasner's
solutions are singular at $t=0$. Moreover, they are anisotropic,
and their anisotropy, measured by ratios of scales in the
principal spatial directions, grows indefinitely on the approach
to the singularity (with two exceptions, up to exchange or the
indices, $p_1=-1/3,\,p_2=p_3=2/3$ and $p_1=p_2=0,\,p_3=1$). The
Kasner metric \eqref{Kasner} may be one of the possible answers to
another important question: on symmetries of the primordial
universe, close to the singularity. Although the recent universe
is isotropic, it does not determine that it has been isotropic
from the beginning. Therefore, cosmological models allowing for
primordial anisotropy should be taken into consideration.

The authors of \cite{BKL} look for a generalisation of the Kasner
solutions of \eqref{Einstein} to a possibly large class preserving
Kasner's properties: homogeneity and increasing anisotropy on the
approach to the singularity. The solutions should properly
describe dynamics of the universe in the neighbourhood of the
cosmic singularity. The asymptotic behaviour of the universe under
these assumptions is known as the BKL scenario. Limiting the
interest to the region close to the singularity allows for
reducing the Einstein equations to relatively simple ODE. Below,
we roughly summarise a derivation of these equations, as presented
in \cite{Ryan}. The derivation starts from the Einstein equations.
In natural units ($c=1,~G=1$)
\be\label{Einstein}
R_{\mu\nu}-\frac12\,R\, g_{\mu\nu}= 8\pi\,T_{\mu\nu},
\ee
where $\mu,\, \nu=0,1,2,3,~~ R_{\mu\nu}$ is the Ricci tensor, $R$
-- the Ricci scalar, $g_{\mu\nu}$ the metric tensor, $T_{\mu\nu}$
-- the stress-energy tensor (all the tensors symmetric upon
$\mu\rightleftarrows\nu$).

It was shown in \cite{BKL} that the contribution of matter, i.e.
the energy-momentum tensor $T_{\mu\nu}$ has higher order in time
on the approach to the singularity at $t=0$, compared to the
singular behaviour of the spacetime curvature; therefore it is
neglected in this description. For further calculations, the
authors of \cite{BKL} assume the universe to be homogeneous and
choose the synchronous frame of reference. This choice allows for
assuming the metric tensor in the form
\be\label{metric}
ds^2=dt^2-\g_{ab}(t)dx^a dx^b,
\ee
where $\g_{ab}$ are components of the spatial metric tensor in the
coordinate system $x^a,~a=1,2,3$; summation convention is used to
covariant-contravariant pairs of indices.

This metric, substituted to the Einstein equations, reduces the
4-dimensional (4D) problem to a problem of finding a 3D metric,
whose Lie algebra structure has been classified by Bianchi. The
further calculations correspond to Bianchi~IX; the structure
constants may be chosen as $\ep_{abc}$, where $\ep$ is the
Levi-Civita antisymmetric symbol.

The calculations look simpler if we rescale $dt$ by the factor of
the spatial volume, according to
\be
dt=\sqrt{\g} dt',
\ee
where $\g$ is the determinant of the spatial (time-dependent)
metric tensor $\g_{ab}$. In what follows we will omit the prime at
the rescaled-time symbol, bearing in mind that the singularity at
old $t=0$ now appears as the limit at $t=\infty$ for the new $t$,
so a collapse at $t\to\infty$ corresponds to expansion of the
universe starting at the old time equal to zero.

Of the ten Einstein equations for spacetime without matter, those
for the $^0_a$ components ($a=1,\, 2,\, 3$) provide only relations
between constants, they do not describe the dynamics. What remains
are six equations for $R_a^b,~~(a,b=1,2,3)$,~ and one for $R_0^0$.

Without the matter term, the six Einstein equations for the
spatial components read
\be
R_a^b=\frac1{2\g}\dot{\kap}_a^b+P_a^b=0\nn
\ee
and the one for the temporal component has the form
\be
R_0^0-R_a^a=\frac1{4\g}\left(\kap_a^b\kap_b^a-\frac{(\dot{\g})^2}{\g^2}\right)-P_a^a=0,\nn
\ee
where the dot over a symbol denotes differentiation with respect
to the rescaled time, $\kap_a^b=\g^{bc}\dot{\g}_{ca}$, while
$P_a^b$ are components of the 3D Ricci tensor, built from the
components of the spatial metric tensor and the structure
constants. The latter are chosen to be $\ep_{abc}$.

A possibility of integration of these equations follows from the
Bianchi identities, yielding
\be
\ep_{abc}\kap_a^b=C_c=const.,
\ee
where summation over identical indices applies, and $C_c$ is a
vector integral of motion.

The choice of the synchronous frame is not unique. E.g. we have
freedom of its rotation. We use this freedom for choosing the
constants of motion $C_c$ to fix its orientation. Our choice is
$C_1=C_2=0,\quad C_3\equalscolon C$

The spatial metric tensor $\hat{\g}$ may be diagonalised and
represented by its components along its principal axes in the 3D
space. Let the diagonalised matrix be
\be
\hat{\Gamma}=\mathrm{diag}(\Gamma_1,\Gamma_2,\Gamma_3)\,.
\ee
In general, the axes would rotate with respect to our chosen
frame. In terms of the Euler angles, the total rotation
$\hat{\mathcal{R}}$ is a composition (group product) of 3
rotations: by angle $\ph$ about the $z$ axis (precession angle),
by $\th$ about the $x$ axis (nutation angle) and by $\psi$ about
the~new $z$ axis (pure rotation angle)
\be
\hat{\mathcal{R}}=\hat{\mathcal{R}}_{\psi}\hat{\mathcal{R}}_{\th}\hat{\mathcal{R}}_{\ph}.
\ee
The spatial equations for the off-diagonal components of the Ricci
tensor determine the rotation
\begin{align}
\sin\,\th\,\sin\psi\,\dot{\ph}+\cos\,\psi\,\dot{\th}&
=\frac{\Gamma_2\Gamma_3\sin\,\th\,\sin\,\psi}{(\Gamma_2-\Gamma_3)^2}\nn\\
\sin\,\th\,
\cos\,\psi\,\dot{\ph}-\sin\,\psi\,\dot{\th}&=\frac{\Gamma_3\Gamma_1\sin\,\th\,\cos\,\psi}{(\Gamma_3-\Gamma_1)^2}\nn\\
\cos\,\th\,\dot{\ph}+\dot{\psi}&=\frac{\Gamma_1\Gamma_2\sin\,\th\,\cos\,\psi}{(\Gamma_1-\Gamma_2)^2}.
\end{align}
In addition, we have another $4$ equations: $3$ of them describe
the dynamics of three $\Gamma$'s; they correspond to the diagonal
spatial Einstein equations. The fourth one is the Einstein
equation for the temporal (diagonal) component $R_0^0$ (see
\cite{Ryan} for details). The assumption of indefinitely growing
anisotropy proves to be consistent with thus obtained equations.

Let us name $\Gamma_1$ the greatest of the $\Gamma$'s, the
smallest being $\Gamma_3$. Then, close to the singularity, we have
\be\label{Gamma-rel}
\Gamma_1\gg \Gamma_2\gg \Gamma_3\, .
\ee
Making use of this inequality, we can greatly simplify the
Einstein equations by neglecting terms of order
$\Gamma_3/\Gamma_2$, $\Gamma_2/\Gamma_1$ and higher.

In the zero order in these ratios, the r.h.s.'s of the equations
containing the $t$-derivatives of the angles vanish, which means
that the rotation of the principal axes stops on the approach to
the singularity: $(\th,\ph,\psi)\to(\th_0,\ph_0,\psi_0)$. Having
fixed the angles, we define
\be
\Gamma_1\equalscolon a,\quad \Gamma_2C^2\cos^2\th_0\equalscolon
b,\quad \Gamma_3C^4
\sin^2\th_0\,\cos^2\th_0\,\sin^2\ph_0\equalscolon c.\nn
\ee
Then the three equations corresponding to the diagonal spatial
components become
\begin{equation}\label{L1}
\frac{d^2 \ln a  }{d t^2} = \frac{b}{a}- a^2,~~~~\frac{d^2 \ln b
}{d t^2} = a^2 - \frac{b}{a} + \frac{c}{b},~~~~\frac{d^2 \ln c }{d
t^2} = a^2 - \frac{c}{b},
\end{equation}
subject to the constraint imposed by the temporal equation
\begin{equation}\label{L2}
\frac{d\ln a}{dt}\;\frac{d\ln b}{dt} + \frac{d\ln
a}{dt}\;\frac{d\ln c}{dt} + \frac{d\ln b}{dt}\;\frac{d\ln c}{dt} =
a^2 + \frac{b}{a} + \frac{c}{b} \, .
\ee
This ends the derivation according to \cite{Ryan}. Quantities
$\,a=a(t),\, b=b(t)$ and $\,c=c(t)$ are called directional scale
factors. They are, up to the multipliers of order $1$,
proportional to length scales in three principal directions of the
chosen synchronous reference system, while the time parameter $t$
is the proper time rescaled by the volume scale (remember that the
singularity at zero of the proper time corresponds to the limit
$t\to\infty$ if the initial state has zero volume). These
equations will be shortly called the BKL equations. Due to
time-reversibility of equations \eqref{L1}--\eqref{L2}, they may
describe both, expansion of the universe starting from the
singularity or its final collapse.

Numerous papers, were devoted to the analysis (both analytic and
numeric) of the asymptotic behaviour of the universe in the BKL
scenario, e.g. \cite{BKL,BKL3,Ryan,Belinski:2014kba}. A
Hamiltonian approach was analyzed in detail in \cite{CP}, and a
comparison with the diagonal mixmaster universe was done in
\cite{CMX}. The scenario was discussed in detail, on a broad
background of related Bianchi models, in the book by Belinski and
Henneaux \cite{book}. Compared to those analyses, the goal of this
paper is rather modest: we provide a simple geometric picture,
which yields some useful exact results on the BKL scenario,
staying within physics described by the BKL equations \eqref{L1},
\eqref{L2}. These equations are interesting in two aspects:
\bi\item
They provide information on the behaviour of the universe close to
the singularity.
\item
As they are approximate, it may be interesting, how their
solutions are related to the exact solutions, especially the
Kasner solutions.
\ei
Apparently simple nonlinear equations often reveal interesting
behaviour. This also applies to the system \eqref{L1}, \eqref{L2}.
So far, relatively little attention has been devoted to the
possibilities inherent in these equations themselves. Their exact
solutions \eqref{solution} were not known until the author's work
with Piechocki \cite{GP}. Recently, other explicit solutions were
found for special cases, where $a$ or $b/a$, or $c/b$ are equal to
zero \cite{Conte}. It seemed to be very likely that a more
detailed analysis of the equations \eqref{L1}, \eqref{L2}, could
yield interesting results about their solutions and consequently
on physics of the universe in the described regime. This analysis
is performed in this paper, with the stress on the asymptotic
behaviour near the singularity. It includes the question of the
oscillatory and chaotic character of the approach to the
singularity, which was first discussed in \cite{BKL69}. In this
paper, we provide a rigorous proof that the exact solution
\eqref{solution} is the only one which is differentiable down to
the limit $t\to\infty$. Since this solution was found to be
unstable in \cite{GP}, it means that the approach according to the
BKL equations is always chaotic.

This paper is structured as follows:

In section \ref{Sec2}, the earlier results are shortly summarised.
These include the basic properties of \eqref{L1}, \eqref{L2} and
the exact solution from \cite{GP}. Section \ref{Sec3} contains
description of methods, especially the geometric tool of the
present analysis, which is the cone of the kinetic part of the
Lagrangian (further called ``kinetic energy''). Section \ref{Sec4}
contains (in \ref{Sub4.2}) one of the main results, which is
uniqueness of the exact solution \eqref{solution} as the only one
which allows for the collapse of the universe with differentiable
dynamics of the length scales. In Section \ref{Sec5}, the
Kasner-like and quasi-Kasner solutions are described. The other
result, stating that the BKL equations are not satisfied by
asymptotics of the exact Kasner solutions, is discussed in
Subsection \ref{Sub5.2}.

Lengthy proofs have been put off to two appendices.

\section{Earlier results}\label{Sec2}
\subsection{Basic properties of the equations\label{2A}}
\indent\textit{Symmetries:} The way in which equations,
\eqref{L1}, \eqref{L2} were obtained determines that there is no
symmetry under permutation of $a,\,b$ and $c$ (see
\eqref{Gamma-rel}). On the contrary, the growing anisotropy
assumption results in $a\gg b\gg c$. The system is evidently
symmetric under time reversal $t\rightleftarrows -t$; thus it can
describe the universe in both a collapse or an explosion as its
reversal. The equations have two Lie symmetries \cite{P-proc}. The
first one is a shift in time $t\rightleftarrows t-t_0$ for any
$t_0$ (which is obvious for an autonomous system). The second is a
scaling symmetry: If $a,\,b$ and $c$ constitute the solution of
\eqref{L1}, \eqref{L2}, and $\lam$ is the scaling parameter,
$t'=\lam t,~~a'=a/\lam,~~ b'=b/\lam^3,~~c'=c/\lam^5$, then
$a',\,b'$ and $c'$ as functions of $t'$ make another solution of
the system \cite{P-proc}.

\textit{Dependence:} Apparently, the system is overdetermined, due
to the constraint \eqref{L2} imposed on solutions of \eqref{L1}.
However, the constraint specifies a value of the only constant of
motion. Therefore, each of the equations \eqref{L1} may be
obtained from a system consisting of the other two of \eqref{L1}
and the constraint \eqref{L2}. E.g. \cite{P-proc}, if we
substitute $\ddot{a}$ and $\ddot{b}$ from the first two of the
equations \eqref{L1} into the $t$-derivative of the constraint
\eqref{L2}, we obtain the 3rd equation of \eqref{L1} multiplied by
$(\dot{a}/a + \dot{b}/b)$ (the dot denotes time differentiation).
This way, the 3rd equation is shown to be dependent on the other
two of \eqref{L1} and the constraint \eqref{L2}, with the
exception of unphysical solutions satisfying $ab= const$ (which
would be consistent with equations \eqref{L1}, \eqref{L2} for
trivial $a=b=0$ only).

\textit{Canonical structure} \cite{CP} Substitution
\be\label{a2x}
a=\exp(x_1),\quad b=\exp(x_2),\quad c=\exp(x_3)
\ee
yields a system derivable from a Lagrangian
\be\label{Lagrx}
\mathcal{L}=\dot{x}_1\dot{x}_2+\dot{x}_2\dot{x}_3+\dot{x}_3\dot{x}_1+\exp(2x_1)+\exp(x_2-x_1)+\exp(x_3-x_2),
\ee
with the constraint \eqref{L2} turning into \cite{CP}
\be\label{consx}
\mathcal{H}\colonequals\sum_{i=1}^3\frac{\p\mathcal{L}}{\p
\dot{x}_i}\dot{x}_i-
\mathcal{L}=\dot{x}_1\dot{x}_2+\dot{x}_2\dot{x}_3+\dot{x}_3\dot{x}_1-\exp(2x_1)-\exp(x_2-x_1)-\exp(x_3-x_2)=0.
\ee
Equation \eqref{consx} clarifies the sense of the dependence
between \eqref{L1} and \eqref{L2}: the constraint \eqref{L2} is a
particular choice of the first integral $\mathcal{H}$ for
solutions of equations \eqref{L1}, namely $\mathcal{H}=0$.

The Lagrangian has a well defined potential and kinetic
``energies''. The latter is an indefinite quadratic form of
signature $(+,-,-)$, whose zero surface is a cone. As seen from
\eqref{consx}, the potential energy is always negative while the
total energy is zero. This means that the kinetic energy is
positive, i.e., position of the system in the space of
``velocities'' is inside the cone (further, the quotation marks
will be omitted, also for the accelerations, i.e., derivatives of
the velocities, as well as the kinetic, potential and total
energies).

\subsection{The exact solution}\label{ex-sol}
In \cite{GP}, we found an exact analytical solution of the BKL
equations. The solution is unique up to a time shift. It reads
\be\label{solution}
a(t)= \frac{3}{\lvert t-t_0\rvert },~~ b(t)= \frac{30}{ \lvert
t-t_0\rvert ^{3}},~~ c (t)= \frac{120}{\lvert t-t_0\rvert ^{5}} \,
\ee
where   $\lvert t - t_0\rvert  \neq 0$ and $t_0$ is an arbitrary
real number.

The exact solution may be obtained by a substitution of a Laurent
series about an arbitrary singular point $t_0$, or by looking for
a solution which has power-like behaviour for $t\to\infty$, or
else as a self-similar solution with respect to the scaling
symmetry mentioned in subsection \ref{2A} \cite{GP}.

\textit{Instability of the exact solution} In \cite{GP}, we
explicitly solved the linear equation for small perturbations of
the exact solution and we found that the exact solution is
unstable. The perturbations have two oscillatory components whose
amplitudes tend to zero as $t\to\infty$, but it is insufficient
for stability, as the scale factors $a,\, b,\, c$ also tend to
zero. The instability manifests in the growth of the ratios of the
perturbation amplitudes to the respective perturbed scale factors;
these ratios increase as $t^{1/2}$. A characteristic value of the
ratio between the oscillation frequencies (approximately equal to
2.06) is one of the results of \cite{GP}; some chance exists that
this ratio might have left marks in the spectrum of presently
observed waves.

If we consider expansion of the universe from a point, the
oscillations grow with time, but their growth is slower by the
factor $t^{-1/2}$ than the expansion. Hence, with respect to the
expansion itself, we can consider the expanding universe as stable
(the terms proportional to $K_3$ in equation (16) of \cite{GP},
though apparently increasing, represent time shift only).

\textit{Importance of the exact solution} The reader might
consider the exact solution unimportant: it requires very special
initial conditions and it is unstable. However, it will later be
proved that it is the only differentiable solution of the BKL
equations in which the collapse to zero occurs in all three
principal directions. The fact that the only solution suitable for
a model of the universe which smoothly collapses to a point (or
smoothly expands from a point) is unstable seems to be an
important property of the BKL scenario.

\section{Methods}\label{Sec3}
We apply the aforementioned Lagrangian formalism, and illustrate
the evolution of the system by its trajectory in the space of
velocities in the diagonalised version of Lagrangian \eqref{Lagrx}
(defined below, in the first subsection).

\subsection{Useful variables}
Transformation \eqref{a2x} naturally replaces the original
variables $a,\,b,\,c$ by their logarithms $x_1,\,x_2,\,x_3$,
suitable for the Lagrangian description. However, the description
becomes clearer if we diagonalise the kinetic energy. If we care
about simplicity of the equations rather than unitarity of the
diagonalising transformation (accepting its determinant to be
$-6$), a good substitution is
\be\label{x2u}
x_1=u_1-u_2-u_3,\quad x_2=u_1+2u_3,\quad x_3=u_1+u_2-u_3 \, ,
\ee
which yields the Lagrangian in the form diagonal in the velocities
$\dot{u}_1,\,\dot{u}_2,\,\dot{u}_3$,
\be\label{Lagru}
\mathcal{L}=3\dot{u}_1^2-\dot{u}_2^2-3\dot{u}_3^2+\exp\big(2(u_1-u_2-u_3)\big)+\exp(u_2-3u_3)+\exp(u_2+3u_3).
\ee
Variables $u_1,\,u_2,\,u_3$ define the principal directions in the
velocity space. The dynamics in the new variables is determined by
the Lagrange equations
\bs\label{Lagrequ}
\begin{align}
&\ddot{u}_1=\frac13 e^{2(u_1-u_2-u_3)},\label{Lagrequa}\\
&\ddot{u}_2=e^{2(u_1-u_2-u_3)}-e^{u_2}\cosh(3u_3),\label{Lagrequb}\\
&\ddot{u}_3=\frac13
e^{2(u_1-u_2-u_3)}-e^{u_2}\sinh(3u_3)\label{Lagrequc}.\\
&\text{with the constraint}\nn\\
\mathcal{H}\colonequals
&\,3\dot{u}_1^2-\dot{u}_2^2-3\dot{u}_3^2-e^{2(u_1-u_2-u_3)}-2e^{u_2}\cosh(3u_3)=0.\label{Lagrequcons}
\end{align}
\es

In terms of the original variables, the new ones are
\be\label{u2abc}
u_1=\frac13\ln(abc),\quad u_2=\frac12\ln(c/a),\quad
u_3=\frac16\ln(b^2/ac).
\ee
 As we can see, $u_1$ is the logarithm of the
volume scale, up to a multiplicative constant. Hence, the
orthogonalisation automatically separates dynamics of the volume
from that of the shape, thus doing what Misner introduced in the
first stage of his transformation for the mixmaster model
\cite{MTW}.

The velocities might simply be expressed in terms of the canonical
momenta $p_i=\p \mathcal{L}/\p \dot{u}_i,~ i=1,2,3$; then
$\mathcal{H}$ becomes the Hamiltonian, whose kinetic part is also
a diagonal quadratic form in the momenta. However, the momenta are
equal to the velocities, up to a multiplicative constant.
Therefore, we do not introduce extra momentum-variables.

The variables $u_1,\,u_2,\,u_3$ will be extensively used in our
further analysis.

Variables $a, \,b, \, c$ are not suitable for numerical
simulations, especially for their graphic presentation, because of
the disproportion between their sizes $a\gg b\gg c$. This purpose
is better served by quantities of equal order of magnitude. The
shape of equations \eqref{L1} and \eqref{L2} suggest that these
could be
\be\label{a2qrs}
q\colonequals a^2,\quad r\colonequals b/a,\quad s\colonequals c/b,
\ee
while their logarithmic counterparts
\be\label{qrs2y}
y_1\colonequals\ln{q},\quad y_2\colonequals\ln{r},\quad
y_3\colonequals\ln{s},
\ee
would be the counterparts of $x_1,\,x_2,\,x_3$ for the
corresponding Lagrangian description. A simple manipulation of the
original equations \eqref{L1} leads to those satisfied by the new
variables, which may be cast into a compact form
\be\label{L1qrs}
\left(\ba{c}\ln q\\ \ln r\\
\ln
s\ea\right)^{\!\centerdot\centerdot}=M\cdot\left(\ba{c}q\\r\\s\ea\right),
\text{~~or~~}\left(\ba{c}\ddot{y}_1\\ \ddot{y}_2\\
\ddot{y}_3\ea\right)=M\cdot\left(\ba{c}e^{y_1}\\ e^{y_2}\\
e^{y_3}\ea\right)
\ee
with the constraint given by
\be\label{L2qrs}
\frac12 \left(\ba{ccc}\ln q & \ln r &\ln s
\ea\right)^{\centerdot}\,\cdot
M^{-1}\cdot\left(\ba{c}\ln q\\ \ln r\\
\ln s\ea\right)^{\!\centerdot}-q-r-s=0
\ee
where the constant matrix $M$ is given by
\be\label{M}
M=\left(\ba{rrr}-2&\,2& 0\\2&-2&1\\0&\,1&-2\ea\right),\text{ with
 } \det M = 2, ~~ M^{-1}=\left(\,\ba{rrr}\tfrac32&\,2& 1\\2&2&1\\1&\,1&0\ea\,\right).
\ee
Equations \eqref{L1qrs} for $y_i,~i=1,\,2,\,3$ may be derived from
a simple Lagrangian
\be\label{Lagry}
\mathcal{L}= \frac12
\left(\ba{ccc}\dot{y}_1&\dot{y}_2&\dot{y}_3\ea\right)\cdot
M^{-1}\cdot\left(\ba{c}\dot{y}_1\\ \dot{y}_2\\
\dot{y}_3\ea\right)+e^{\,y_1}+e^{\,y_2}+e^{\,y_3}
\ee
The constraint again corresponds to $\mathcal{H}=0$, where
$\mathcal{H}$ differs from the Lagrangian \eqref{Lagry}, by the
opposite signs at the exponential functions. Explicitly
\be\label{consy}
\mathcal{H} =\frac34 \dot {y} _ 1^2 +
 2 \dot {y}_1 \dot {y} _ 2 + \dot {y} _ 2^2 + \dot {y} _ 2 \dot {y}_3 + \dot {y}_3\dot {y}_1
 -\left( e^{\,y_1}+e^{\,y_2}+e^{\,y_3}\right)=0.
\ee
Diagonalisation of the kinetic energy in the Lagrangian
\eqref{Lagry}, is achieved by substitution of $y_1,\,y_2$ and
$y_3$ with their values in terms of $u_1,\,u_2$ and $u_3$
respectively
\be\label{y2u}
y_1=2 (u_1 - u_2 - u_3),\quad y_2 = u_2 + 3 u_3,\quad y_3 = u_2 -
3 u_3,
\ee
which leads back to Lagrangian \eqref{Lagru} and the constrained
Lagrange equations which stem from it \eqref{Lagrequ}.

\subsection{The cone of kinetic energy}
Our basic geometric tool for analysis and presentation of the
dynamics will be the quadrics of kinetic energy.
\be\label{quadrics}
E_k\colonequals 3 \dot{u}_1^2-\dot{u}_2^2-3\dot{u}_3^2=\eps\ge 0.
\ee
For $\eps>0$ they are two-sheet hyperboloids, becoming a cone for
$\eps=0$.
 Assume that the initial conditions describe a universe, whose
volume is decreasing. In the variables $\dot{u}_1,\,\dot{u}_2$ and
$\dot{u}_3$, we have
\begin{proposition}
The dynamics of the universe which shrinks with $t$ takes place in
the lower interior of the cone
\be\label{l-i}
3\dot{u}_1^2-\dot{u}_2^2-3\dot{u}_3^2>0,\quad \dot{u}_1<0.
\ee
\end{proposition}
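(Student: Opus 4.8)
The plan is to translate the physical hypothesis ``the volume is decreasing'' into an analytic statement about $u_1$, and then show the sign constraint on $\dot u_1$ together with the constraint \eqref{Lagrequcons} forces the point $(\dot u_1,\dot u_2,\dot u_3)$ into the claimed region. First I would recall from \eqref{u2abc} that $u_1=\tfrac13\ln(abc)$, so $u_1$ is (up to a positive constant) the logarithm of the spatial volume $\sqrt{\g}\sim abc$. Hence ``the universe shrinks with $t$'' means the volume $abc$ is strictly decreasing along the trajectory, i.e.\ $\dot u_1<0$; this is the second inequality in \eqref{l-i} and it is essentially the definition of the regime under consideration (consistent with the rescaled time running to $+\infty$ at the singularity, as explained after \eqref{L2}).

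Next I would establish the first inequality, $3\dot u_1^2-\dot u_2^2-3\dot u_3^2>0$, i.e.\ $E_k>0$ strictly. From the constraint \eqref{Lagrequcons},
\be\label{prop-constraint}
E_k=3\dot{u}_1^2-\dot{u}_2^2-3\dot{u}_3^2=e^{2(u_1-u_2-u_3)}+2e^{u_2}\cosh(3u_3).
\ee
The right-hand side is a sum of a strictly positive exponential and a term $2e^{u_2}\cosh(3u_3)\ge 2e^{u_2}>0$, so $E_k>0$ on the whole trajectory. In the language of \eqref{quadrics} this says the velocity vector lies strictly inside one of the two sheets of the hyperboloid family, never on the cone $\eps=0$. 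Combining with $\dot u_1<0$, the velocity vector sits in the lower interior sheet, which is exactly \eqref{l-i}. I should also note the translation back to the original variables: $e^{2(u_1-u_2-u_3)}=a^2=q$ and $2e^{u_2}\cosh(3u_3)=e^{u_2+3u_3}+e^{u_2-3u_3}=b/a+c/b=r+s$ by \eqref{u2abc}, \eqref{a2qrs}, so \eqref{prop-constraint} is just the statement $E_k=q+r+s>0$, which connects cleanly with the remark in subsection \ref{2A} that the kinetic energy equals minus the (negative) potential energy.

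The only genuine subtlety — and the step I would treat most carefully — is the passage from ``decreasing volume'' to the strict inequality $\dot u_1<0$ holding for all $t$, not merely initially. A priori one only knows $\dot u_1<0$ at the initial instant; one must argue that $\dot u_1$ cannot reach zero. Here I would use \eqref{Lagrequa}: $\ddot u_1=\tfrac13 e^{2(u_1-u_2-u_3)}>0$ always, so $\dot u_1$ is strictly increasing in $t$; therefore if $\dot u_1<0$ at some initial time it stays negative for all earlier times and can only increase toward, but — on any interval on which the solution exists and the volume keeps decreasing — not reach, zero. Thus on the interval where the hypothesis ``shrinking universe'' applies we do have $\dot u_1<0$ throughout. (If instead one takes ``shrinking universe'' to mean $\dot u_1<0$ is imposed as the standing assumption, the argument collapses to the one-line observation that $E_k=q+r+s>0$.) Putting the two inequalities together gives the Proposition; the convexity argument via \eqref{Lagrequa} is the part worth spelling out, everything else being a direct reading of the constraint.
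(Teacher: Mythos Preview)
Your proof is correct and follows essentially the same approach as the paper: the paper's own argument is the two-line version you give in your parenthetical, namely that $\dot u_1<0$ is \emph{equivalent} to the standing assumption that the volume scale $abc$ is decreasing (by $u_1=\tfrac13\ln(abc)$), and that the constraint \eqref{Lagrequcons} writes $E_k$ as a sum of exponentials, hence strictly positive. Your additional convexity discussion via \eqref{Lagrequa} and the translation $E_k=q+r+s$ are correct and illuminating but go beyond what the paper does; in particular the paper simply takes ``shrinking universe'' as the hypothesis $\dot u_1<0$ and does not argue about its persistence.
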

\begin{proof}
The first inequality (\textit{interior}) follows from the
constraint \eqref{Lagrequcons}, from which
$3\dot{u}_1^2-\dot{u}_2^2-3\dot{u}_3^2$ is equal to a sum of
exponential functions and hence it is positive. The second
(\textit{lower}, i.e. $\dot{u}_1<0$) is equivalent to the
assumption that the volume scale is decreasing, by the first
equation of \eqref{u2abc}.
\end{proof}

\begin{figure}\label{ccone1}
\begin{center}
\vspace{-2.5cm}
\includegraphics[width=1.2\textwidth]{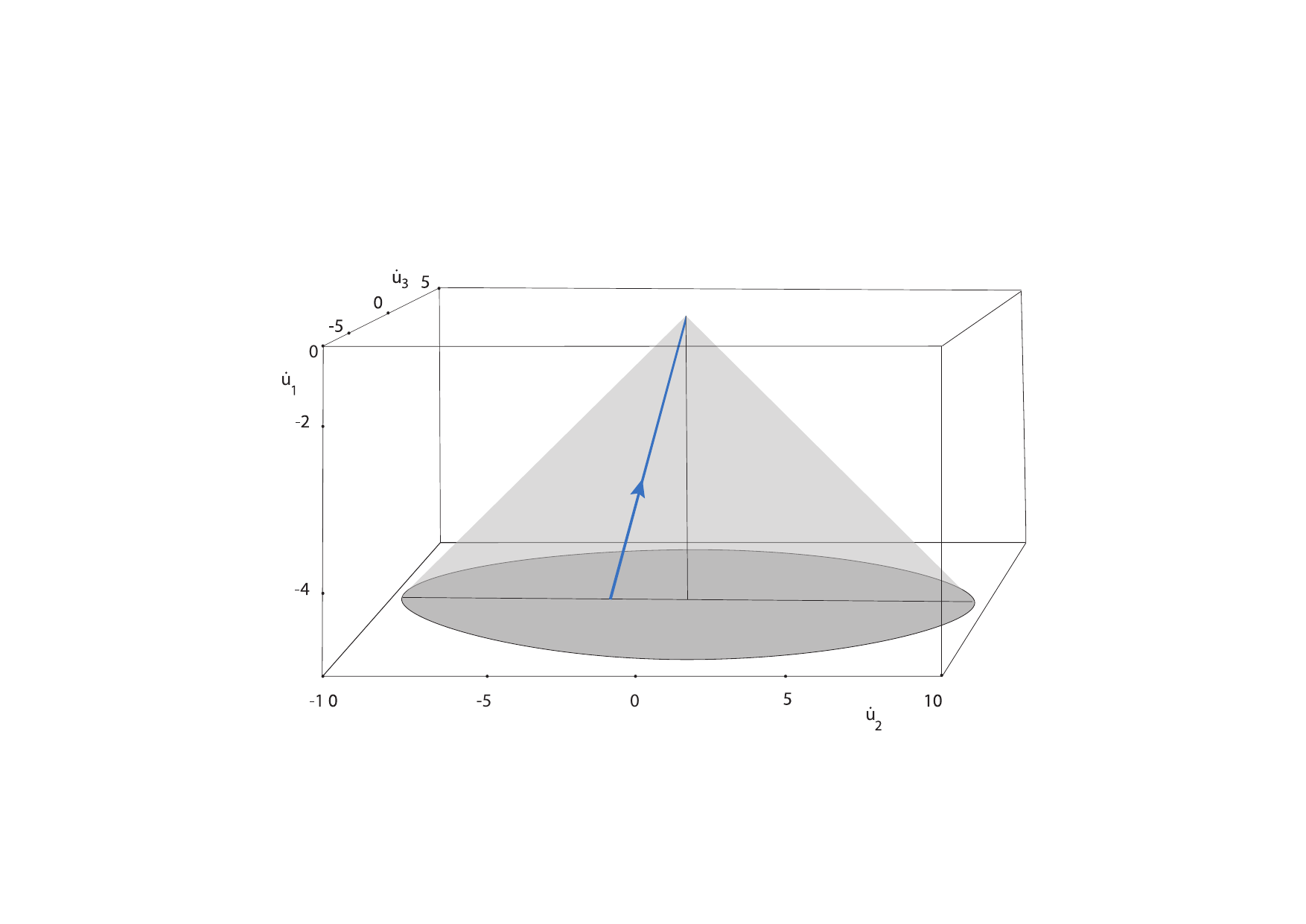}\vspace{-1.8cm}
\vspace{0.1cm} \caption{The lower half (=shrinking universe) of
the cone $3\dot{u}_1^2-\dot{u}_2^2-3\dot{u}_3^2
> 0$. The dynamics of the system takes place inside the cone.
The line with the arrow shows the exact solution; the arrow
indicates its direction of evolution. For $t\to\infty$, the line
tends to the apex of the cone. \\A position in the cone, together
with the tangent to the trajectory, provide complete information
on $u_1, u_2, u_3$, and their derivatives.}
\end{center}
\end{figure}

The following properties make the cone of kinetic energy a
particularly useful tool, reproducing essential information that
phase diagrams provide for single functions:
\begin{proposition}\label{singular}
The conical surface $ 3\dot{u}_1^2-\dot{u}_2^2-3\dot{u}_3^2=0$ is
a singular surface of the solution.
\end{proposition}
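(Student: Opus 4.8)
The plan is to read the statement off the Hamiltonian constraint \eqref{Lagrequcons}, in the same spirit as the proof of the preceding Proposition. First I would record that along any solution the kinetic energy $E_k$ of \eqref{quadrics} equals the right-hand side of \eqref{Lagrequcons}, namely the sum of the two \emph{strictly positive} exponential terms $e^{2(u_1-u_2-u_3)}$ and $2e^{u_2}\cosh(3u_3)$. Hence $E_k>0$ at every finite time of the interval of existence, so the velocity point $(\dot u_1,\dot u_2,\dot u_3)$ never lies on the cone $E_k=0$: the conical surface can occur only as a limiting set of a trajectory, never as a point on it.

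Next I would identify that limit. Since both summands in \eqref{Lagrequcons} are nonnegative, $E_k\to 0$ forces each of them to zero, i.e. $e^{2(u_1-u_2-u_3)}\to 0$ and $e^{u_2}\cosh(3u_3)\to 0$. Rewriting this through \eqref{y2u} (equivalently through \eqref{a2qrs} and \eqref{u2abc}) it reads exactly $q=a^{2}\to 0$ together with $r=b/a\to 0$ and $s=c/b\to 0$; in particular the three directional scale factors $a,b,c$ all vanish and the spatial metric degenerates. Thus the conical surface $E_k=0$ is precisely the locus on which the solution --- the geometry it describes --- becomes singular, which is what ``singular surface of the solution'' is meant to capture; the exact solution \eqref{solution}, which runs into the apex of the cone, is the model case.

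Finally I would show that this degeneration is never reached at a finite value of $t$, so that the cone genuinely bounds the dynamics and plays for the whole system the role the singular points of a phase portrait play for a single function. From the Lagrange equations \eqref{Lagrequ} the accelerations $\ddot u_i$ are bounded by a fixed linear combination of the very same two exponential terms (using $|\sinh 3u_3|\le\cosh 3u_3$); hence if $E_k\to 0$ as $t\to t_\ast$ with $t_\ast$ finite, then $\ddot u_i\to 0$, so $\dot u_i$ and then $u_i$ tend to finite limits, and evaluating \eqref{Lagrequcons} at $t_\ast$ yields $E_k(t_\ast)>0$ --- a contradiction. A trajectory can therefore approach the cone only as $t\to\pm\infty$, which, recalling the volume rescaling of the time variable in Section \ref{sec1}, is precisely the approach to the cosmic singularity.

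I expect the only real obstacle to be this last, finite-time, step: one must also rule out that the solution itself ceases to exist at a finite $t_\ast$ while its velocity point escapes to infinity along a null direction of the cone. This is dispatched by the same boundedness observation --- as long as the $u_i$ stay bounded on $[t_0,t_\ast)$ the exponentials, hence the $\ddot u_i$, hence the $\dot u_i$, stay bounded, so any finite-time breakdown has $\lvert u_i\rvert\to\infty$ and $E_k$ bounded away from $0$, again away from the cone. Everything else is a direct reading of \eqref{Lagrequcons} and \eqref{Lagrequ}.
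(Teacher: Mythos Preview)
Your second paragraph \emph{is} the paper's proof: from the constraint \eqref{Lagrequcons}, $E_k\to 0$ forces each exponential summand to zero, i.e.\ all three exponents $y_i$ of \eqref{y2u} tend to $-\infty$; the paper phrases the conclusion as ``at least $u_1$ and $u_2$ tend to $-\infty$'' (equivalently, your $a,b,c\to 0$), and stops there. That is the entire content the paper assigns to ``singular surface''.

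Everything else in your proposal --- the first paragraph and especially the third and fourth --- is extra. The statement that the cone is reached only as $t\to\pm\infty$ is not part of this proposition; the paper proves it separately as Proposition~\ref{infinite}, and by a different device: it writes $t=\int du_1/\dot u_1$, observes $1/\dot u_1$ is bounded away from zero while $u_1\to-\infty$, and concludes the integral diverges. Your finite-time argument (bound $\lvert\ddot u_i\rvert$ by $E_k$, hence $\dot u_i,u_i$ stay finite on a bounded interval, hence the exponentials stay positive) is a legitimate alternative and arguably more self-contained, but it belongs to that later proposition, not here.

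One small caution about your fourth paragraph: the assertion that $\lvert u_i\rvert\to\infty$ forces $E_k$ to stay bounded away from $0$ is not obvious and in fact need not hold (e.g.\ $u_2\to-\infty$ with $u_1-u_2-u_3\to-\infty$ sends both exponentials to zero). The scenario you worry about is already excluded by your third-paragraph bound: if $E_k\to 0$ on a finite interval then $\ddot u_i\to 0$, so $\dot u_i$ cannot escape to infinity. You can simply drop the last clause.
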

\begin{proof}
From the constraint \eqref{Lagrequcons}, if the kinetic energy
$E_k$ turns to zero, then the sum of exponential functions (the
minus potential energy, $E_p$) also has to be zero, whence all
exponents in \eqref{y2u} tend to $-\infty$ on approach to the
surface (including its apex). This requires that at least $u_1$
and $u_2$ tend to $-\infty$.
\end{proof}

\begin{proposition}
The position of the system in the cone, together with the
direction of the tangent to the trajectory, provide complete
information on the local values of $u_1,\,u_2,\, u_3$ and their
time derivatives.
\end{proposition}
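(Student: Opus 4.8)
The plan is to exhibit an explicit invertible correspondence between, on one side, the pair consisting of a point $(\dot u_1,\dot u_2,\dot u_3)$ inside the cone together with a tangent direction of the trajectory at that point, and, on the other side, the six numbers $(u_1,u_2,u_3,\dot u_1,\dot u_2,\dot u_3)$ (i.e.\ a full point of the phase space). Three of the six are already in hand: the velocities $\dot u_1,\dot u_2,\dot u_3$ are precisely the coordinates of the point in the cone. So the real content is to show that the position in the cone plus the tangent line determines $u_1,u_2,u_3$.

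First I would parametrise the trajectory by $t$ and note that the tangent vector to the curve $t\mapsto(\dot u_1(t),\dot u_2(t),\dot u_3(t))$ is, up to an overall positive scalar, the acceleration vector $(\ddot u_1,\ddot u_2,\ddot u_3)$. By the Lagrange equations \eqref{Lagrequa}--\eqref{Lagrequc}, this acceleration is expressed entirely through the three exponentials
\[
A\colonequals e^{2(u_1-u_2-u_3)},\qquad B\colonequals e^{u_2}\cosh(3u_3),\qquad C\colonequals e^{u_2}\sinh(3u_3),
\]
namely $\ddot u_1=\tfrac13 A$, $\ddot u_2=A-B$, $\ddot u_3=\tfrac13 A-C$. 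Since $\ddot u_1=\tfrac13 A>0$ always, the acceleration vector is never zero, so the tangent line is genuinely well defined and its orientation is fixed (it points toward increasing $t$). Knowing the tangent direction means knowing the ratios $\ddot u_1:\ddot u_2:\ddot u_3$; knowing in addition a single one of the three accelerations fixes all of $A,B,C$, and then $u_1,u_2,u_3$ are recovered by taking logarithms and inverting the linear system in \eqref{y2u} (equivalently, using $e^{y_1}=A$, $e^{y_2}=B+C$, $e^{y_3}=B-C$ together with $y_i$ in terms of $u_j$).

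The key step, therefore, is to supply that missing scalar — the overall normalisation of the acceleration — from the data we are allowed to use, i.e.\ from the position $(\dot u_1,\dot u_2,\dot u_3)$ in the cone. This is exactly what the constraint \eqref{Lagrequcons} delivers: it states
\[
3\dot u_1^2-\dot u_2^2-3\dot u_3^2=A+2B,
\]
so the interior coordinate of the cone point is precisely the combination $A+2B$ of the accelerations. Hence from the position in the cone we read off $A+2B$, from the tangent direction we read off the ratios among $A$, $A-B$, $\tfrac13 A - C$ (equivalently among $A,B,C$), and the two together determine $A,B,C$ uniquely and hence $u_1,u_2,u_3$ uniquely. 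I would close by remarking that the map is manifestly smooth and invertible on the open lower interior of the cone, so the correspondence is genuine.

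The step I expect to be the main obstacle — really the only place a subtlety can hide — is checking that the linear map from $(A,B,C)$ to the triple $(\ddot u_1,\ddot u_2,\ddot u_3)$ together with the extra functional $A+2B$ is nondegenerate, i.e.\ that a \emph{direction} (ratios only) plus the single value $A+2B$ suffices. Since $\ddot u_1=\tfrac13 A$ alone already fixes $A$ once the ratios are known, and then $B$ and $C$ follow linearly, the only genuinely degenerate case would be $A=0$; but $A=e^{2(u_1-u_2-u_3)}>0$ strictly, so this never occurs and the argument goes through everywhere in the interior of the cone. (On the conical surface itself the exponentials all vanish by Proposition~\ref{singular}, consistent with that being a singular set where the correspondence degenerates.)
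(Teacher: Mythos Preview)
Your argument is correct and is essentially the paper's own proof: the paper also reads off $(\dot u_1,\dot u_2,\dot u_3)$ from the position, gets the acceleration direction from the tangent, fixes its scale via the linear relation $2\ddot u_2-9\ddot u_1+3\dot u_1^2-\dot u_2^2-3\dot u_3^2=0$ (which is exactly your $A+2B=3\dot u_1^2-\dot u_2^2-3\dot u_3^2$ rewritten in accelerations), and then inverts \eqref{Lagrequa}--\eqref{Lagrequc} to recover $u_1,u_2,u_3$.

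One small correction to your final paragraph: the genuinely degenerate direction is not $A=0$ but $A+2B=0$ (equivalently $\ddot u_2/\ddot u_1=9/2$, the exception the paper singles out), since that is when the known scalar $A+2B$ fails to pin down the scale along the given direction. This never occurs in the interior because $A,B>0$ there, so your conclusion is unaffected; but the sentence ``$\ddot u_1=\tfrac13 A$ alone already fixes $A$ once the ratios are known'' is circular as written, since $\ddot u_1$ itself is only known up to scale until you invoke $A+2B$.
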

\begin{proof}
The Cartesian coordinates of the position in the cone are the
components of the velocity, $\dot{u}_1,\,\dot{u}_2$ and
$\dot{u}_3$. The direction of the tangent yields proportions
between the components of the acceleration
$\ddot{u}_1,\,\ddot{u}_2$ and $\ddot{u}_3$. Given the components
of the velocity, the length of the acceleration vector can be
retrieved from
\be\label{length}
2\ddot{u}_2-9\ddot{u}_1+3\dot{u}_1^2-\dot{u}_2^2-3\dot{u}_3^2=0,
\ee
which is a simple linear combination of equations
\eqref{Lagrequa}, \eqref{Lagrequb} and \eqref{Lagrequcons} (with
the exception, $\ddot{u}_2/\ddot{u}_1=9/2$, which is possible only
on the conical surface). Having the accelerations, we can
calculate the values of $u_1,\,u_2$ and $u_3$ by solving the
system \eqref{Lagrequa}, \eqref{Lagrequb},
\eqref{Lagrequc} for these variables.\\
By differentiation of these equations, we can obtain higher
derivatives of $u_i ~~ i=1,2,3$ if they exist.
\end{proof}

\begin{proposition}\label{limits-dot}
Each of the velocities, $\dot{u}_1,\,\dot{u}_2$ and $\dot{u}_3$,
has a finite limit as $t\to\infty$.
\end{proposition}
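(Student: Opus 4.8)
The plan is to bound each acceleration by the kinetic energy and then show that the kinetic energy is integrable in time; finiteness of the limits of the velocities follows at once. Write $t_\ast$ for the initial time and recall from \eqref{quadrics} that $E_k=3\dot u_1^2-\dot u_2^2-3\dot u_3^2$. From the equations of motion \eqref{Lagrequa}--\eqref{Lagrequc}, together with $|\sinh(3u_3)|\le\cosh(3u_3)$, one reads off $|\ddot u_i|\le e^{2(u_1-u_2-u_3)}+e^{u_2}\cosh(3u_3)\le E_k$ for $i=1,2,3$, the last inequality following from the constraint \eqref{Lagrequcons}. Hence, once we know $\int_{t_\ast}^{\infty}E_k\,dt<\infty$, each $\ddot u_i$ is absolutely integrable on $[t_\ast,\infty)$, so $\dot u_i(t)=\dot u_i(t_\ast)+\int_{t_\ast}^{t}\ddot u_i\,ds$ converges to a finite limit as $t\to\infty$. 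The whole statement thus reduces to integrability of $E_k$.

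For that I would exploit two monotone quantities. First, $\ddot u_1=\tfrac13 e^{2(u_1-u_2-u_3)}>0$, so $\dot u_1$ is strictly increasing, while the lower-interior property of the cone established above keeps $\dot u_1<0$ for all $t$ (by \eqref{Lagrequcons}, $\dot u_1^2>\tfrac13(\dot u_2^2+3\dot u_3^2)\ge0$, so $\dot u_1$ cannot change sign); hence $\dot u_1\in[\dot u_1(t_\ast),0)$ is bounded. Feeding this into the constraint \eqref{Lagrequcons} gives $\dot u_2^2+3\dot u_3^2=3\dot u_1^2-e^{2(u_1-u_2-u_3)}-2e^{u_2}\cosh(3u_3)<3\dot u_1^2\le 3\dot u_1(t_\ast)^2$, so $\dot u_2$ and $\dot u_3$ are bounded too. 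Second, the identity \eqref{length} rearranges to $9\ddot u_1-2\ddot u_2=E_k\ge0$, so $9\dot u_1-2\dot u_2$ is non-decreasing; being bounded by the previous step, it converges, and therefore $\int_{t_\ast}^{\infty}E_k\,dt=\lim_{t\to\infty}(9\dot u_1-2\dot u_2)-(9\dot u_1-2\dot u_2)(t_\ast)$ is finite. This completes the plan.

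The point that needs care is the order of the steps: the monotone combination $9\dot u_1-2\dot u_2$ extracted from \eqref{length} is of no use until it is known to be bounded, and boundedness must be obtained first for $\dot u_1$ (from $\ddot u_1>0$ plus the lower-interior sign) and then for $\dot u_2,\dot u_3$ (from the constraint), so there is no circularity. There is also no finite-time obstruction: with the $\dot u_i$ bounded, the $u_i$ grow at most linearly, so the solution persists on all of $[t_\ast,\infty)$ and the singular conical surface of Proposition \ref{singular} is approached only in the limit $t\to\infty$. A harmless by-product is that $e^{y_1},e^{y_2},e^{y_3}$ --- equivalently $q,r,s$, hence the whole potential energy --- are all integrable at infinity, since $e^{u_2}\cosh(3u_3)\ge\tfrac12\,e^{\,u_2\pm3u_3}$.
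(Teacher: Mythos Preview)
Your argument is correct, but it takes a different route from the paper's. The paper solves the three dynamical equations \eqref{Lagrequa}--\eqref{Lagrequc} for the three exponential terms, obtaining \eqref{for-exp}: each of $3\ddot u_1$, $4\ddot u_1-\ddot u_2-\ddot u_3$, and $2\ddot u_1-\ddot u_2+\ddot u_3$ equals a positive exponential, so the three linear combinations $\dot u_1$, $4\dot u_1-\dot u_2-\dot u_3$, $2\dot u_1-\dot u_2+\dot u_3$ are monotone; boundedness from the cone then gives three finite limits, and inverting the nonsingular $3\times3$ coefficient matrix yields the limits of $\dot u_1,\dot u_2,\dot u_3$ individually. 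You instead extract a \emph{single} monotone combination, $9\dot u_1-2\dot u_2$, from the identity \eqref{length}, and use its convergence to conclude that $E_k$ is integrable; the bound $|\ddot u_i|\le E_k$ then delivers convergence of each $\dot u_i$ by absolute integrability of the accelerations. The paper's approach is slightly more direct---pure monotonicity plus linear algebra, no integrals---while yours has the advantage of producing the integrability of $E_k$ (hence of each of $e^{y_1},e^{y_2},e^{y_3}$) as a genuine by-product, which the paper does not record and which could be useful in the later asymptotic analysis.
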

\begin{proof}
Solving the dynamics equations \eqref{Lagrequa}, \eqref{Lagrequb},
\eqref{Lagrequc}, with respect to the exponential functions
(including the components of the hyperbolic ones), we get
\bs\label{for-exp}
\begin{align}
3\ddot{u}_1=&e^{2(u_1-u_2-u_3)},\label{for-expa}\\
4\ddot{u}_1-\ddot{u}_2-\ddot{u}_3=&e^{u_2+3u_3}\\
2\ddot{u}_1-\ddot{u}_2+\ddot{u}_3=&e^{u_2-3u_3}
\end{align}
\es
The r.h.s. of these equations are positive, whence their l.h.s.
are second derivatives of convex functions, and first derivatives,
of increasing functions,
$\dot{u}_1,~4\dot{u}_1-\dot{u}_2-\dot{u}_3$ and
$2\dot{u}_1-\dot{u}_2+\dot{u}_3$, respectively. The latter
functions are bounded, because the increasing property of
$\dot{u}_1$, together with $\dot{u}_1(0)<0$, infer
$\lvert\dot{u}_1(t)\rvert<\lvert\dot{u}_1(0)\rvert$, while both
$\lvert\dot{u}_2(t)\rvert$ and $\lvert\dot{u}_3(t)\rvert$ are not
greater than $\sqrt{3}\lvert\dot{u}_1(t)\rvert$ as long as we are
inside the cone. Hence, all three linear combinations of the first
derivatives are increasing functions bounded from above, and thus
have finite limits.

The determinant of the coefficient matrix for the linear
combinations in the l.h.s. of \eqref{for-exp} is nonzero ($=6$),
whence limits $g_1\!\colonequals \lim_{t\to\infty}\dot{u}_1,~
g_2\!\colonequals \lim_{t\to\infty}\dot{u}_2$ and
$g_3\!\colonequals \lim_{t\to\infty}\dot{u}_3$ may be uniquely
calculated from the limits of these linear combinations.
\end{proof}
\begin{corollary}
From this result, it follows that $g_1<0,~~\lvert g_2\rvert\le
\sqrt{3}\lvert g_1\rvert,~~\lvert g_3\rvert\le \lvert g_1\rvert$
and
\be
u_1\sim g_1 t,\quad u_2\sim g_2 t,\quad u_3\sim g_3 t \text{ as }
t\to\infty,
\ee
with the exception of $g_1=g_2=g_3=0$, corresponding to the apex
of the cone.
\end{corollary}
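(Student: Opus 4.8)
The plan is to read the whole statement off from Proposition \ref{limits-dot} together with the cone geometry already established, so the argument will be short. First I would settle the sign $g_1\le 0$ (and $g_1<0$ off the apex). The proof of Proposition \ref{limits-dot} already shows $\dot u_1$ is strictly increasing, since by \eqref{Lagrequa} one has $\ddot u_1=\tfrac13 e^{2(u_1-u_2-u_3)}>0$, while the proposition on the shrinking universe gives $\dot u_1(0)<0$. Next I would note that $\dot u_1$ cannot vanish at any finite time: by the constraint \eqref{Lagrequcons}, the quantity $3\dot u_1^2-\dot u_2^2-3\dot u_3^2$ equals the sum of exponentials $e^{2(u_1-u_2-u_3)}+e^{u_2+3u_3}+e^{u_2-3u_3}$, which is strictly positive at every finite time (this is precisely the content of Proposition \ref{singular}: the kinetic energy reaches $0$ only in the limit where those exponents go to $-\infty$). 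Hence $3\dot u_1^2>0$ for all finite $t$, so $\dot u_1$ is continuous and never zero, and being negative at $t=0$ it remains strictly negative; therefore $g_1=\lim_{t\to\infty}\dot u_1\le 0$.

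Then I would obtain the two bounds on $g_2$ and $g_3$ by passing to the limit in the interior inequality. At every finite $t$, being inside the cone, $3\dot u_1^2>\dot u_2^2+3\dot u_3^2$, so $|\dot u_2|<\sqrt3\,|\dot u_1|$ and $|\dot u_3|<|\dot u_1|$; letting $t\to\infty$ gives $|g_2|\le\sqrt3\,|g_1|$ and $|g_3|\le|g_1|$. In particular $g_1=0$ forces $g_2=g_3=0$, and by Proposition \ref{singular} this degenerate limit is exactly the apex of the cone --- the announced exceptional case; away from it $g_1<0$ strictly.

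Finally I would prove the asymptotics $u_i\sim g_i t$. Writing $u_i(t)=u_i(0)+\int_0^t\dot u_i(s)\,ds$ and dividing by $t$, the right side equals $u_i(0)/t$ plus the running average of $\dot u_i$ on $[0,t]$; since $\dot u_i(s)\to g_i$, this Cesàro mean tends to $g_i$, so $u_i(t)/t\to g_i$. When $g_i\neq 0$ this is genuine asymptotic equivalence --- in particular $g_1<0$ forces $u_1\to-\infty$, i.e. the volume scale $u_1=\tfrac13\ln(abc)$ collapses, consistent with approaching the singularity --- while when some $g_i$ vanishes the relation is to be read as $u_i=o(t)$.

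I do not expect a genuine obstacle. The only two points needing a little care are (i) the claim that $\dot u_1$ never vanishes at finite time, which rests essentially on the strict positivity of the potential part of the energy guaranteed by Proposition \ref{singular}, and (ii) the identification of the exceptional locus $g_1=g_2=g_3=0$ with the apex of the cone, again immediate from \eqref{Lagrequcons}. Everything else is taking limits in inequalities and the elementary Cesàro argument.
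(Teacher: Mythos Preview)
Your proposal is correct. The paper gives no explicit proof of this corollary, presenting it as immediate ``from this result'' (Proposition~\ref{limits-dot}); your argument fills in exactly the details the paper leaves implicit---the monotonicity of $\dot u_1$ and the cone inequality $3\dot u_1^2>\dot u_2^2+3\dot u_3^2$ already used inside the proof of Proposition~\ref{limits-dot}, together with the standard Ces\`aro step for $u_i/t\to g_i$---so there is nothing to contrast.
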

\begin{corollary}
As $\dot{y}_i,~~i=1,2,3$, are linear combinations of $\dot{u}_i$
\eqref{y2u}, they also have finite limits as $t\to\infty$.
However, for all $i$, $y_i\to -\infty$, which is a consequence of
the constraint \eqref{consy}.
\end{corollary}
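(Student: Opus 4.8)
The plan is to split the statement into its two halves and handle them separately, leaning heavily on the preceding Proposition~\ref{limits-dot} and its first corollary. For the first half, since $\dot y_i$ is by \eqref{y2u} a fixed linear combination of $\dot u_1,\dot u_2,\dot u_3$ with constant coefficients, and each $\dot u_j$ tends to a finite limit $g_j$ as $t\to\infty$, the limit of $\dot y_i$ exists and equals the same linear combination of $g_1,g_2,g_3$; explicitly $\lim\dot y_1=2(g_1-g_2-g_3)$, $\lim\dot y_2=g_2+3g_3$, $\lim\dot y_3=g_2-3g_3$. This is essentially immediate and requires no work beyond quoting the earlier result.

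For the second half, the claim $y_i\to-\infty$, the natural route is through the constraint \eqref{consy}: $\mathcal H=0$ forces the kinetic form $\tfrac34\dot y_1^2+2\dot y_1\dot y_2+\dot y_2^2+\dot y_2\dot y_3+\dot y_3\dot y_1$ to equal $e^{y_1}+e^{y_2}+e^{y_3}>0$ at all times. I would first dispose of the generic case: by the first corollary, outside the apex we have $u_j\sim g_j t$ with $g_1<0$, so $y_1=2(u_1-u_2-u_3)\sim 2(g_1-g_2-g_3)t$ and similarly $y_2\sim(g_2+3g_3)t$, $y_3\sim(g_2-3g_3)t$. It then suffices to show each of these three asymptotic slopes is strictly negative. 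The slope of $y_1$ is $2(g_1-g_2-g_3)$; using $|g_2|\le\sqrt3|g_1|=-\sqrt3 g_1$ and $|g_3|\le|g_1|=-g_1$ gives $g_1-g_2-g_3\le g_1+(\sqrt3+1)|g_1|$, which is not obviously negative, so a cruder bound is not enough and I must instead argue from positivity of the individual exponentials: since each of $e^{y_1},e^{y_2},e^{y_3}$ is bounded above (the kinetic form is bounded, being a continuous function of the bounded velocities), each $y_i$ is bounded above, and combined with $y_i\sim(\text{slope})\,t$ this forces the slope to be $\le 0$ and, if the slope vanishes, one must push further. The clean way is: boundedness of $\dot y_i$ plus boundedness of $e^{y_i}$ from the constraint; if some $y_i$ did not go to $-\infty$ it would stay bounded (it cannot go to $+\infty$ as $e^{y_i}$ is bounded), hence $\dot y_i$ would have to have limit $0$ and in fact $y_i$ would converge; but then the corresponding exponential converges to a positive constant, while the other exponentials are nonnegative, so the kinetic form converges to a strictly positive constant — and I would then show this contradicts the dynamics, e.g. via \eqref{for-exp}, which says the exponentials are themselves (up to the constant factors and linear combinations) second derivatives $3\ddot u_1=e^{2(u_1-u_2-u_3)}$, etc.; a second derivative of a bounded function cannot be bounded below by a positive constant on $[0,\infty)$.

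So the decisive mechanism is: from \eqref{for-expa}, $3\ddot u_1=e^{y_1}\ge 0$, and $\dot u_1$ is bounded, hence $\int_0^\infty e^{y_1}\,dt=3(g_1-\dot u_1(0))<\infty$, which forces $\liminf e^{y_1}=0$; but $e^{y_1}=3\ddot u_1$ with $\dot u_1$ monotone and convergent, and a standard argument (a $C^1$ function with integrable, eventually-monotone-in-a-suitable-sense second derivative, or simply: $e^{y_1}\to 0$ because $\ddot u_1\to 0$ as the monotone bounded $\dot u_1$ settles) gives $e^{y_1}\to 0$, i.e. $y_1\to-\infty$. The analogous integrals for $4\dot u_1-\dot u_2-\dot u_3$ and $2\dot u_1-\dot u_2+\dot u_3$ (both bounded and monotone) give $e^{y_2}\to 0$ and $e^{y_3}\to 0$, hence $y_2,y_3\to-\infty$. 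I would present it in this order: (i) limits of $\dot y_i$ from \eqref{y2u} and Proposition~\ref{limits-dot}; (ii) from \eqref{for-exp}, the three exponentials are integrable on $[0,\infty)$ because the relevant linear combinations of velocities are monotone and bounded; (iii) each integrable exponential, being the second derivative of a convergent monotone function, tends to $0$, so each $y_i\to-\infty$.

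The main obstacle I expect is step (iii): promoting "$e^{y_i}$ is integrable on $[0,\infty)$" to "$e^{y_i}\to 0$". Integrability alone does not force a function to $0$, so I need the extra structure that $e^{y_i}$ (or rather the corresponding linear combination of $\ddot u_j$) is the derivative of a \emph{monotone bounded} function; then $\dot u_1$ (and the two other combinations) has a finite limit and is monotone, so its derivative must tend to $0$ — here one should be slightly careful and invoke that $\ddot u_1=\tfrac13 e^{y_1}$ is itself continuous and, by differentiating \eqref{for-expa}, has a controlled derivative, ruling out thin spikes. That is the only place where a genuine (if short) analytic argument is needed; everything else is bookkeeping with the constant matrices in \eqref{y2u} and \eqref{for-exp}.
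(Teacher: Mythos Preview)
Your argument is correct. For the first half you do exactly what the paper does (the sentence ``linear combinations of $\dot u_i$'' is the entire proof in the paper as well). For the second half the paper gives no proof beyond the clause ``which is a consequence of the constraint \eqref{consy}''; the intended mechanism, visible from Propositions~\ref{singular} and~\ref{infinite}--6, is that the trajectory necessarily ends on the conical surface (or apex), so $E_k\to 0$, and then the constraint forces $\sum_i e^{y_i}\to 0$, hence each $y_i\to -\infty$. Your route is genuinely different: you bypass the ``trajectory ends on the cone'' step and instead read off from \eqref{for-exp} that each $e^{y_i}$ is the time-derivative of a bounded monotone combination of the $\dot u_j$, hence integrable on $[0,\infty)$, and then use that $\frac{d}{dt}e^{y_i}=e^{y_i}\dot y_i$ is bounded (both factors are) to rule out spikes and conclude $e^{y_i}\to 0$ via a Barbalat-type argument. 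This buys you a self-contained proof that does not forward-reference Proposition~6; the paper's route is shorter once that proposition is in hand, but as placed in the text the corollary is really only justified after Proposition~6 is proved. Your proposal meanders through two abandoned attempts before reaching the working one; in the write-up keep only steps (i)--(iii) of your final paragraph and the Lipschitz/Barbalat remark.
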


\begin{proposition}\label{infinite}
A trajectory which ends on the surface or apex of the cone, needs
infinite time to reach it.
\end{proposition}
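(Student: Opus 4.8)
The plan is to contradict the hypothesis that the cone is reached in finite time, by playing off an elementary Lipschitz bound on $u_1$ against Proposition~\ref{singular}.

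First I would record that, as long as the trajectory has not reached the cone, it lies in the region \eqref{l-i}, so $\dot u_1<0$, while \eqref{Lagrequa} gives $\ddot u_1=\tfrac13 e^{2(u_1-u_2-u_3)}>0$. Hence $\dot u_1$ is strictly increasing and confined to $[\dot u_1(t_0),0)$, so $|\dot u_1(t)|\le|\dot u_1(t_0)|$ for all such $t\ge t_0$. Integrating this bound, $u_1$ is Lipschitz in time, $u_1(t)\ge u_1(t_0)-|\dot u_1(t_0)|\,(t-t_0)$; in particular $u_1$ stays finite on every bounded time interval.

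Next, suppose for contradiction that the trajectory reaches the conical surface, or its apex, at some finite time $t^\ast$, i.e.\ $E_k(t)\to0$ as $t\to t^\ast$. By the constraint \eqref{Lagrequcons}, $E_k$ equals the sum of the non-negative exponentials $e^{2(u_1-u_2-u_3)}$, $e^{u_2+3u_3}$ and $e^{u_2-3u_3}$, so each of them tends to $0$; arguing as in the proof of Proposition~\ref{singular}, this forces $u_2\to-\infty$, and then $u_1\to-\infty$, as $t\to t^\ast$. This contradicts the Lipschitz estimate of the previous step, which keeps $u_1(t^\ast)$ finite. Hence $t^\ast$ cannot be finite, and the surface or apex of the cone is approached only as $t\to\infty$.

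The only point I expect to need care is the implication ``$E_k\to 0$ along the trajectory $\Rightarrow$ $u_1\to-\infty$'': this is not Proposition~\ref{singular}'s assertion about a point lying on the cone but the corresponding limiting statement, obtained by adding $u_2+3u_3\to-\infty$ and $u_2-3u_3\to-\infty$ to get $u_2\to-\infty$, and then using $u_1-u_2-u_3\to-\infty$ together with the fact that $u_3$ cannot diverge to $+\infty$ faster than $u_2$ diverges to $-\infty$. Everything else is the monotonicity of $\dot u_1$ and a single integration.
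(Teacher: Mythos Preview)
Your proof is correct and follows essentially the same approach as the paper: both use that $\dot u_1$ is increasing and negative (hence $|\dot u_1(t)|\le|\dot u_1(t_0)|$), together with $u_1\to-\infty$ on approach to the cone (from Proposition~\ref{singular}), to conclude that the time needed is infinite. The paper phrases this as the divergence of the integral $t=\int du_1/\dot u_1$ over an infinite $u_1$-interval with integrand bounded away from zero, while you phrase the same estimate as a Lipschitz bound on $u_1$ and argue by contradiction; these are equivalent formulations of one idea.
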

\begin{proof}
Consider a trajectory beginning in the lower half of the cone and
ending on its surface or apex. Since, $\dot{u}_1<0$, hence $u_1$
is a decreasing function of time. On this basis, time may be
calculated as
\be\label{time1}
t=\int_{u_1(0)}^{u_1}du_1'/\dot{u}_1'
\ee
 We have
$0>\dot{u}_1(t)\ge\dot{u}_1(0)$ in the lower half of the cone,
whence $1/\dot{u}_1(t)\le\dot{u}_1(0)<0$. Hence the integrand
$1/\dot{u}_1'$ is separated from $0$ in the interval of
integration. On the other hand, $u_1\to -\infty$ when we approach
the boundary (see the proof of Proposition \ref{singular}). The
integral \eqref{time1} in the limit $u_1\to -\infty$ extends over
infinite interval, while its integrand is separated from zero.
Hence, it is infinite.
\end{proof}
\begin{remark}
The time parameter $t$ calculated in \eqref{time1}, over a finite
or infinite interval, is always positive, as the integrand is
negative, while the lower limit of integration is greater than the
upper limit.
\end{remark}

\begin{proposition}
In the limit $t\to \infty$, each trajectory reaches the surface or
apex of the cone.
\end{proposition}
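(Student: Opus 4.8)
The plan is to prove that the kinetic energy $E_k=3\dot u_1^2-\dot u_2^2-3\dot u_3^2$ tends to zero along every trajectory. Since the instantaneous position of the system in the velocity space is precisely the point $(\dot u_1,\dot u_2,\dot u_3)$, and this point has a limit by Proposition~\ref{limits-dot}, that limit must then lie on the zero level set of $E_k$, which is exactly the conical surface together with its apex.

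In detail I would proceed in three steps. First, rewrite the constraint \eqref{Lagrequcons} (equivalently \eqref{consy}): using $2e^{u_2}\cosh(3u_3)=e^{u_2+3u_3}+e^{u_2-3u_3}$ and the substitution \eqref{y2u}, it reads $E_k=e^{y_1}+e^{y_2}+e^{y_3}$, so along any solution $E_k$ equals the positive potential term $E_p$. Second, invoke the corollary to Proposition~\ref{limits-dot} stating that $y_i\to-\infty$ for every $i$ as $t\to\infty$; hence $E_p\to0$, and therefore $E_k\to0$. Third, pass to the limit in $E_k=3\dot u_1^2-\dot u_2^2-3\dot u_3^2$: by Proposition~\ref{limits-dot} the velocities converge to $(g_1,g_2,g_3)$, so $3g_1^2-g_2^2-3g_3^2=0$. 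Thus the limiting point of the trajectory lies on the conical surface when $(g_1,g_2,g_3)\neq0$ and coincides with the apex when $g_1=g_2=g_3=0$ (this is, e.g., the behaviour of the exact solution \eqref{solution}). By Proposition~\ref{infinite} this limit is attained only as $t\to\infty$, which is consistent with the word ``reaches''.

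The only step that deserves care is $E_p\to0$, i.e.\ that \emph{all three} exponentials decay; this is precisely what rules out convergence of the trajectory to an interior point of the cone. It is delivered by the corollary already proved, but a self-contained argument is short as well: each $\dot y_i$ is a fixed linear combination of $g_1,g_2,g_3$ and hence bounded; if some $\dot y_i$ had a positive limit then $E_p\to\infty$, contradicting $E_p=E_k$ bounded; and the borderline case of a zero limit is excluded because $\ddot u_1=\tfrac13 e^{y_1}\ge0$ is integrable on $(0,\infty)$ (its integral equals $g_1-\dot u_1(0)$) while $\frac{d}{dt}e^{y_1}=\dot y_1\, e^{y_1}$ is bounded, so $e^{y_1}$ is uniformly continuous and must therefore vanish at infinity; applying the same reasoning to the combinations $4\dot u_1-\dot u_2-\dot u_3$ and $2\dot u_1-\dot u_2+\dot u_3$ of \eqref{for-exp} disposes of $e^{y_2}$ and $e^{y_3}$. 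Beyond this point I expect no genuine obstacle, only a routine passage to the limit.
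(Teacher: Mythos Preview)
Your argument is correct and takes a different route from the paper's. The paper expresses time as $t=\int_{\dot u_1(0)}^{\dot u_1} d\dot u_1'/\ddot u_1'$ (legitimate since $\dot u_1$ is strictly increasing by \eqref{Lagrequa}) and argues that, because $\ddot u_1>0$ at interior points and the interval of integration is bounded, an interior endpoint would correspond to finite $t$; hence the $t\to\infty$ limit must lie on the cone. Your approach instead reads the conclusion off the constraint: since the corollary to Proposition~\ref{limits-dot} gives $y_i\to-\infty$, one has $E_p\to0$, hence $E_k\to0$, and by Proposition~\ref{limits-dot} the velocity limit $(g_1,g_2,g_3)$ satisfies the cone equation. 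What you gain is a self-contained justification of the step $E_p\to0$ (integrability of each $e^{y_i}$ from \eqref{for-exp} plus uniform continuity, i.e.\ a Barbalat-type lemma), which the paper's corollary asserts but does not spell out; in fact something of this kind is needed to make the time-integral argument rigorous as well, since mere pointwise positivity of $\ddot u_1$ on a finite interval does not by itself bound $\int d\dot u_1'/\ddot u_1'$. A one-line alternative closer to the paper's spirit: by \eqref{length} one has $9\ddot u_1-2\ddot u_2=E_k$, so a positive limit for $E_k$ would force $9\dot u_1-2\dot u_2$ to grow without bound, contradicting Proposition~\ref{limits-dot}.
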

\begin{proof}
Time can also be expressed as
\be\label{time2}
t=\int_{\dot{u}_1(0)}^{\dot{u}_1}d\dot{u}_1'/\ddot{u}_1',
\ee
because $\dot{u}_1$ is an increasing function of $t$ (from
\eqref{Lagrequa}, commented in the proof of Proposition
\ref{limits-dot}). For any point of the lower interior of the
cone, the denominator is greater than zero (from
\eqref{Lagrequa}), whence the integrand is finite and so are the
limits of integration. Hence $t$ has a finite value. Merely for
$\left(\dot{u}_1,\,\dot{u}_2,\,\dot{u}_3\right)$ lying on the
surface or at the apex of the cone can $t$ become infinite.
\end{proof}
\begin{remark}
Variable $\dot{u}_1$ may in principle replace time as it is an
increasing function. Nevertheless, its use for this purpose is
limited, as its variation is very uneven (see Fig.\ref{ui}). There
are time intervals where the exponential function in
\eqref{Lagrequa} is close to zero and thus $\dot{u}_1$ hardly
increases; also the other exponential components in
\eqref{Lagrequ} are very small (see Fig. 2, 3), and the trajectory
comes very close to the surface of the cone. We call this
behaviour ``quasi-Kasner'' and discuss it in subsection
\ref{quasi}.
\end{remark}

\begin{proposition}
There is no possibility of a stop in the interior of the cone
(i.e., each point in the interior corresponds to nonzero
acceleration).
\end{proposition}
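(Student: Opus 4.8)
The plan is to read the conclusion straight off the equation of motion \eqref{Lagrequa} for $\ddot u_1$. Its right-hand side is the exponential $\tfrac13 e^{2(u_1-u_2-u_3)}$, which is strictly positive whenever the exponent $u_1-u_2-u_3$ is a finite real number; hence $\ddot u_1>0$, and a fortiori the acceleration vector $(\ddot u_1,\ddot u_2,\ddot u_3)$ cannot vanish. The only point that genuinely needs an argument is that a point in the \emph{interior} of the cone really does correspond to finite values of $u_1,u_2,u_3$, so that this exponent is finite and bounded away from $-\infty$.

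For that step I would invoke the constraint \eqref{Lagrequcons} together with Proposition \ref{singular}. By \eqref{Lagrequcons} the kinetic energy $E_k=3\dot u_1^2-\dot u_2^2-3\dot u_3^2$ equals the sum of the three positive exponential terms; at an interior point $E_k$ is a finite positive number, so each exponential — in particular $e^{2(u_1-u_2-u_3)}$ — is finite, which rules out any exponent tending to $+\infty$. Conversely, by the argument in the proof of Proposition \ref{singular}, any configuration in which one of the exponents tends to $-\infty$ lies on the conical surface $3\dot u_1^2-\dot u_2^2-3\dot u_3^2=0$, i.e. not in the interior. Hence at an interior point $u_1-u_2-u_3\in\R$, so $\ddot u_1=\tfrac13 e^{2(u_1-u_2-u_3)}>0$ and the acceleration is nonzero.

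I do not expect any real obstacle: the statement is essentially an immediate corollary of \eqref{Lagrequa}, and the only subtlety worth spelling out is the equivalence, used above, between ``interior of the cone'' and ``finite coordinates'', which excludes both the degenerate case where all three exponentials vanish (the cone surface, on which $\ddot u_1\to 0$) and the case where one of them blows up (excluded by the constraint). It is natural to append the remark that the same computation shows $\ddot u_1>0$ everywhere the motion is defined, the conical surface being precisely the locus where this positive quantity degenerates to zero — which is exactly what makes that surface singular in the sense of Proposition \ref{singular}.
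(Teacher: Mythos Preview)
Your argument is correct in substance and actually a bit more direct than the paper's, but you over-argue the finiteness issue and misstate what Proposition~\ref{singular} gives you. The paper does not use \eqref{Lagrequa} alone; it invokes equation \eqref{length}, namely $2\ddot u_2-9\ddot u_1+(3\dot u_1^2-\dot u_2^2-3\dot u_3^2)=0$: in the interior the kinetic-energy bracket is strictly positive, so $2\ddot u_2-9\ddot u_1<0$, and hence $\ddot u_1$ and $\ddot u_2$ cannot both vanish. This has the merit of tying the nonvanishing of the acceleration \emph{directly} to the defining inequality of the interior, without any reference to the position variables $u_i$.

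Your route via \eqref{Lagrequa} reaches the same conclusion and in fact yields the sharper statement $\ddot u_1>0$ (used elsewhere, e.g.\ in Proposition~\ref{limits-dot}). However, your appeal to Proposition~\ref{singular} is backwards: that proposition shows that \emph{on the conical surface} all three exponents go to $-\infty$; it does not show that a \emph{single} exponent going to $-\infty$ puts you on the surface, and indeed that implication is false (one can have $2(u_1-u_2-u_3)\to-\infty$ while $e^{u_2}\cosh(3u_3)$ stays bounded away from zero, so $E_k>0$). The point you actually need is much more elementary: a ``point in the interior'' is a state of the system at some instant, hence the $u_i$ are finite real numbers by definition, and $\ddot u_1=\tfrac13 e^{2(u_1-u_2-u_3)}>0$ follows immediately. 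No further argument about limits or the surface is required.
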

\begin{proof}
This property follows directly from equation \eqref{length}. As
long as $3\dot{u}_1^2-\dot{u}_2^2-3\dot{u}_3^2>0$, we have
$2\ddot{u}_2-9\ddot{u}_1<0$, which requires at least one nonzero
component of the acceleration.
\end{proof}
\begin{remark}
Note the absence of $\ddot{u}_3$ in \eqref{length}, which suggests
that solutions with $\ddot{u}_3=0$ may exist. Indeed, this is the
case of the exact solution \eqref{sol-u}.
\end{remark}
Situations where the trajectory in the velocity space slows down
to almost full stop may happen (see Fig. 2), which corresponds to
the ``quasi-Kasner'' behaviour discussed in Subsection
\ref{quasi}.
\section{Solutions ending in the apex of the cone}\label{Sec4}
\subsection{The exact solution in the cone}
In terms of the $u_i$ variables, the exact solution reads
\be\label{sol-u}
u_1=\frac13\ln\frac{10800}{\lvert t-t_0\rvert^9} ,\quad u_2=\half
\ln\frac{40}{\lvert t-t_0\rvert^4},\quad u_3=\frac16\ln\frac52.
\ee
Obviously, for a given sign of $t-t_0$, both $\dot{u}_1$ and
$\dot{u}_2$ have a simple pole, while $\dot{u}_3=0$. We also have
$\dot{u}_2=\frac23\dot{u}_1$, which means that the trajectory
corresponding to the exact solution is a half-line whose end lies
at the apex of the cone (see Fig. 1). Physically, it describes a
power-like collapse of all scale factors $a,\, b,\, c$ to zero,
i.e. a collapse of the universe, in all directions, to a point, as
$t\to\infty$ (which corresponds to the original time tending to
zero from the right).

The instability of the exact solution, found in \cite{GP} and
mentioned in subsection \ref{ex-sol}, affects also the solution in
terms of $u_i$, only the coefficients are different. However, the
solution itself is regular up to the apex.

\subsection{On the possibility of other solutions ending in the
apex}\label{Sub4.2} A question arises: are there any other paths
which approach the apex from the inner cone, along a regular (a
weaker assumption -- differentiable) curve, apart from that of the
exact solution?

The result is negative. Namely
\begin{proposition}\label{uni-apex}
The path in the cone, corresponding to the asymptotic of the exact
solution \eqref{sol-u}, i.e.
\be\label{ex-asymp}
\dot{u}_1\sim-\frac{3}{t-t_0},\quad
\dot{u}_2\sim-\frac{2}{t-t_0},\quad \dot{u}_3\sim 0,
\ee
is the only one which approaches the apex from the lower interior
of the cone along a differentiable curve.
\end{proposition}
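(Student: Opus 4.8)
The plan is to study the asymptotic equations near the apex and show that any differentiable trajectory ending there is forced onto the exact-solution ray. First I would use Proposition \ref{limits-dot} and its corollary: along any trajectory ending at the apex we have $g_1=g_2=g_3=0$, so all three velocities tend to $0$, and from the constraint \eqref{Lagrequcons} the potential terms $e^{2(u_1-u_2-u_3)}$, $e^{u_2+3u_3}$, $e^{u_2-3u_3}$ all tend to $0$ as well. The key quantitative input is that differentiability \emph{at} the apex means the trajectory has a well-defined tangent direction there, i.e. the limit of $(\dot u_1:\dot u_2:\dot u_3)$ exists as a point of the velocity space; equivalently, writing $v_i=\dot u_i$, the ratios $v_2/v_1$ and $v_3/v_1$ converge. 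So the real content is to identify the admissible limiting directions on the cone.

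Next I would exploit the monotonicity structure from the proof of Proposition \ref{limits-dot}: $\dot u_1$, $4\dot u_1-\dot u_2-\dot u_3$ and $2\dot u_1-\dot u_2+\dot u_3$ are all increasing and bounded. Since each of them has a finite limit (here $0$), and their derivatives are the positive exponentials in \eqref{for-exp}, integrability of each exponential over $[0,\infty)$ follows. Then, from \eqref{for-expa}, $\int^\infty e^{2(u_1-u_2-u_3)}\,dt<\infty$ forces $u_1-u_2-u_3\to-\infty$ fast; more usefully, I would feed the asymptotic ansatz $u_i\sim g_i t$ (invalid here since $g_i=0$) back into a refined expansion. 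The cleaner route: parametrise by $\dot u_1$ (increasing, by the Remark after Proposition 8) or directly seek the asymptotic behaviour of solutions of \eqref{Lagrequa}--\eqref{Lagrequc} with all $u_i\to-\infty$. The exact solution has $\dot u_1\sim -3/(t-t_0)$, $\dot u_2\sim -2/(t-t_0)$, $\dot u_3\to$ const; I would show that the $-3/(t-t_0)$ rate is the \emph{slowest} possible decay — any faster decay of $\dot u_1$ would, via \eqref{for-expa} and $3\ddot u_1=e^{2(u_1-u_2-u_3)}$, make the exponential decay too fast to be consistent, and any slower decay violates boundedness — pinning down $\dot u_1\sim -3/(t-t_0)$, hence $u_1\sim -3\ln t$. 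Then \eqref{for-expa} gives $e^{2(u_1-u_2-u_3)}\sim\text{const}/t^2$, so $u_1-u_2-u_3\sim-\ln t+\text{const}$, and combined with $u_1\sim-3\ln t$ this yields $u_2+u_3\sim-2\ln t$; the remaining two equations in \eqref{for-exp} pin $u_2\sim-2\ln t$ and $u_3\to$ const, i.e. $\dot u_2\sim-2/(t-t_0)$, $\dot u_3\to0$, which is exactly \eqref{ex-asymp}.

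The main obstacle — and the step that will require genuine care rather than bookkeeping — is establishing the sharp rate $\dot u_1\sim -3/(t-t_0)$ rather than merely $\dot u_1\to0$. A priori $\dot u_1$ could approach $0$ along a subsequence at a different rate, or oscillate in rate; the differentiability hypothesis at the apex controls the \emph{direction} $(\dot u_2/\dot u_1,\dot u_3/\dot u_1)$ but not immediately the rate in $t$. I expect the right tool is a Grönwall-type or comparison argument on the scalar quantity $w\colonequals e^{2(u_1-u_2-u_3)}=3\ddot u_1$: differentiating, $\dot w = 2w(\dot u_1-\dot u_2-\dot u_3)$, and on the limiting direction $\dot u_1-\dot u_2-\dot u_3$ is a fixed negative multiple of $\dot u_1$, so $w$ satisfies an equation whose only integrable-at-infinity solutions behave like $t^{-2}$, forcing $\ddot u_1\sim\text{const}/t^2$ and hence $\dot u_1\sim\text{const}/t$; matching the constant against the constraint \eqref{Lagrequcons} (which on the limiting ray reads $3g_1^2=$ sum of the three exponentials, all $\sim\text{const}/t^2$) fixes every constant uniquely and rules out every ray except $\dot u_2=\tfrac23\dot u_1$, $\dot u_3=0$. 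A secondary technical point to dispatch is showing the third derivative / higher regularity needed to justify "differentiable curve" does not admit degenerate directions tangent to the cone's surface other than the apex ray itself — but Proposition \ref{singular} already says the whole conical surface is singular, so a differentiable trajectory can only \emph{end} at the apex, not run along the surface, which closes the argument.
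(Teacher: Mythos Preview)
Your proposal has the right ingredients but assembles them in the wrong order, and the step you flag as the ``main obstacle'' is a genuine gap that you do not close. You try to establish the \emph{rate} $\dot u_1\sim -3/(t-t_0)$ first and then read off the direction; the paper does the opposite, and for good reason. Your rate argument (``slowest possible decay''; ``any slower violates boundedness'') is not a proof: nothing you have written rules out, say, $\dot u_1\sim -c/t$ with $c\neq 3$, or a non-power-law approach, until you already know the limiting direction $(\alpha,\beta)$. The Gr\"onwall sketch on $w=3\ddot u_1$ uses $\dot u_1-\dot u_2-\dot u_3\sim(1-\alpha-\beta)\dot u_1$, which presupposes the very direction you are trying to determine; so at best you get a one-parameter self-consistency relation, not uniqueness. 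Also, ``differentiable curve at the apex'' means the tangent to the trajectory in velocity space exists, i.e.\ the limits of $\ddot u_i/\ddot u_1$ exist; the velocity ratios $\dot u_i/\dot u_1$ then follow by l'H\^opital, not the other way round.

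The paper's proof (Appendix A) works in the $y_i$ variables of \eqref{qrs2y} and determines the \emph{direction first}. From \eqref{L1qrs} the ratios $\ddot y_i/\ddot y_1$ are explicit ratios of exponentials $e^{y_j}$; writing $e^{y_1}/e^{y_2}=e^{y_1(1-y_2/y_1)}$ with $y_1\to-\infty$, a trichotomy on whether $g_{21}=\lim y_2/y_1$ is $<1$, $=1$, or $>1$ forces $g_{21}=1$ (the other cases give negative values of a quantity that must be nonnegative), and similarly $g_{31}=1$. This case analysis is the step your proposal lacks. Once the direction is pinned, finite limits of the differences $y_i-y_1$ follow directly from \eqref{intermediate}, and only \emph{then} does the rate come out, from the algebraic relation \eqref{length}/\eqref{lengthy} (not from a comparison argument): dividing \eqref{lengthy} by $\dot y_1^2$ and using $g_{i1}=1$ gives $\lim\frac{d}{dt}(1/\dot y_1)=-\tfrac12$, hence $\dot y_1\sim -2/(t-t_0)$, with the remaining constant fixed by the constraint. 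Your approach is salvageable if you reverse the order --- use the ratios of the three positive exponentials in \eqref{for-exp} to force $u_3$ bounded and $u_2-\tfrac23 u_1$ bounded (hence $\alpha=\tfrac23$, $\beta=0$) before touching the rate --- but as written the rate-first step is unjustified.
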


The above result means that no other integral curve ending at the
apex may approach it at any definite angle with the axis of the
cone. Together with the instability of the exact solution, this
implies that all-direction collapse of the universe must be
chaotic.

The well-defined angle is equivalent to existence of finite limits
\be\label{lim-apex}
\lim_{t\to\infty}\ddot{u}_2/\ddot{u}_1\quad\text{and}\quad\lim_{t\to\infty}\ddot{u}_3/\ddot{u}_1.
\ee
To represent the collapsing universe, the trajectory should lie
within the lower interior of the cone, defined by the inequalities
\eqref{l-i}. This imposes another constraint on the values of the
limits \eqref{lim-apex}. Namely
\be\label{inner}
\lim_{t\to\infty}\left[\frac13\left(\frac{\dot{u}_2}
{\dot{u}_1}\right)^2+\left(\frac{\dot{u}_3}{\dot{u}_1}\right)^2\right]=
\lim_{t\to\infty}\left[\frac13\left(\frac{\ddot{u}_2}
{\ddot{u}_1}\right)^2+\left(\frac{\ddot{u}_3}{\ddot{u}_1}\right)^2\right]
\le 1.
\ee
The proof of the negative result is lengthy, therefore, it is put
off to Appendix A.

\section{Solutions ending on the surface of the cone}\label{Sec5}
From the fact that $\dot{u}_1$ increases, we conclude that it has
to approach the apex (as the exact solution) or the lateral
surface of the cone. In this section, we discuss the latter case.
\subsection{Asymptotics of the diagonal velocities}

Let a trajectory end on the conical surface, not at the apex.
Then, according to Proposition \ref{limits-dot}, all three
velocities have their limits, $\dot{u}_i\to g_i,~~i=1,2,3,~~g_1\ne
0$, which satisfy the equation of the cone
\be\label{g-cone}
3g_1^2-g_2^2-3g_3^2=0
\ee
With $\dot{u}_i\to g_i$, the asymptotic behaviour of the diagonal
variables is $u_i\sim g_i t$. Translating equation \eqref{g-cone}
into asymptotics of the scale factors, according to \eqref{u2abc},
we obtain
\be
a\sim\exp(2p_1 t),\quad b\sim\exp(2p_2 t),\quad c\sim\exp(2p_3 t)
\ee
where the common coefficient in front of $p_i,\,i=1,2,3$ might
have any value, depending on the time scale. By straightforward
calculation, equation \eqref{g-cone} turns into a constraint on
the constants $p_i$
\be\label{p-cond}
p_1p_2+p_2p_3+p_3p_1=0,
\ee
which by rescaling and choosing the direction of $t$ so that
$p_1+p_2+p_3=1$ (first Kasner's condition \eqref{p-cond1}
\cite{MTW}) is equivalent to the second Kasner's condition
\eqref{p-cond2}, in accordance with \cite{Ryan}.  This result
means that a solution whose trajectory ends on the conical surface
would behave as exact Kasner's solutions: the universe is squeezed
to zero in one direction while being stretched to infinity in the
remaining two.

\subsection{Impossibility of the exact Kasner-like
asymptotics}\label{Sub5.2} Later, we will see that the exact
Kasner solutions are reproduced with high precision by solutions
of the BKL. However, we are going to show that the aforementioned
exact Kasner-like solutions, though predicted and described in
\cite{Ryan}, cannot satisfy the BKL equations \eqref{L1},
\eqref{L2} (which is acceptable as these equations are
approximate). We will prove it in two stages, using the
$y_i,~i=1,2,3$ variables of \eqref{qrs2y} (which are connected
with the scale factors by \eqref{a2qrs} and with the $u_i$ through
\eqref{y2u}).

\underline{Stage 1}
\begin{proposition}\label{g12=0}
 Let $\g_1, \g_2, \g_3$ be
the limits of $\dot{y}_1,\dot{y}_2,\dot{y}_3$ (respectively) at
$t\to\infty$. Then $\g_1=\g_2=0$, while $\g_3\le 0$.
\end{proposition}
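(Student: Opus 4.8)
The plan is to read the statement off from two facts that are already established: the second corollary to Proposition~\ref{limits-dot}, which says that along every trajectory $\dot y_i$ has a finite limit $\gamma_i$ while $y_i\to-\infty$, and the constraint \eqref{consy}, which holds identically in $t$. The whole proof is then a limiting manipulation of \eqref{consy}, and it should come out quite short.

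I would carry it out in three steps. \emph{Step 1 (sign of the limits).} Each $\gamma_i\le 0$: if $\gamma_i>0$ then $\dot y_i$ is eventually bounded below by $\gamma_i/2>0$, forcing $y_i\to+\infty$ and contradicting $y_i\to-\infty$. \emph{Step 2 (switch off the potential).} Since every $y_i\to-\infty$, each $e^{y_i}\to 0$, so letting $t\to\infty$ in \eqref{consy} kills the potential part of $\mathcal H$ and leaves the kinetic quadratic form evaluated on the limit velocities equal to zero, namely $\tfrac{3}{4}\gamma_1^2+2\gamma_1\gamma_2+\gamma_2^2+\gamma_2\gamma_3+\gamma_3\gamma_1=0$. \emph{Step 3 (force $\gamma_1=\gamma_2=0$).} In the $y$-variables every coefficient of this quadratic form is nonnegative, and by Step~1 every variable is nonpositive, so each of the monomials $\tfrac{3}{4}\gamma_1^2$, $2\gamma_1\gamma_2$, $\gamma_2^2$, $\gamma_2\gamma_3$, $\gamma_3\gamma_1$ is $\ge 0$. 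A sum of nonnegative numbers that vanishes has every summand zero; in particular $\gamma_1^2=0$ and $\gamma_2^2=0$, i.e.\ $\gamma_1=\gamma_2=0$, and $\gamma_3\le 0$ was already Step~1. Note there is no corresponding conclusion for $\gamma_3$, because the $\gamma_3^2$-coefficient in \eqref{consy} is $0$; this is also why the $y_i$, rather than the $u_i$, are the natural variables for this particular argument.

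The one genuinely load-bearing input is the hypothesis of Step~2 --- that along \emph{every} trajectory all three $y_i$ tend to $-\infty$, so that the potential truly vanishes in the limit. This is exactly what the cone analysis supplies: a trajectory reaches the surface or the apex of the cone, where $E_k\to 0$, and then $\mathcal H=0$ forces $E_p=E_k\to 0$, i.e.\ $e^{y_1}+e^{y_2}+e^{y_3}\to 0$. Without it one could not exclude a genuine Kasner triple, in which one directional scale factor diverges and the corresponding $y_i$ stays bounded away from $-\infty$; granting it, Steps~1--3 are elementary and the proposition follows at once. Geometrically, Step~3 confines the admissible limit velocities to the single ray $\gamma_1=\gamma_2=0$ on the limit cone, so that no trajectory can reach its surface along a ``generic'' Kasner direction --- which is the first half of the impossibility result this subsection is aiming at.
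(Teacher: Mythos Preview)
Your proof is correct and follows essentially the same route as the paper: take the limit of the constraint \eqref{consy}, use $y_i\to-\infty$ to kill the potential and to get $\gamma_i\le 0$, and then conclude $\gamma_1=\gamma_2=0$ from the vanishing of the kinetic quadratic form. You are more explicit than the paper in Step~3 --- the paper simply asserts that \eqref{Eky=0} together with $\gamma_i\le 0$ forces $\gamma_1=\gamma_2=0$, whereas you spell out that every monomial is nonnegative and hence individually zero --- but the argument is the same.
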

\begin{proof}
If for $t\to\infty$, the trajectory approaches the surface of the
cone, then the kinetic part in the constraint \eqref{consy} turns
to zero. In terms of the limits $\g_i$
\be\label{Eky=0}
\frac34 \g _ 1^2 +
 2 \g_1 \g _ 2 + \g _ 2^2 + \g _ 2 \g_3 + \g_3
 \g_1=0
\ee
The constraint \eqref{consy} requires that the potential part also
turns to 0. This means that the asymptotics $y_i= \g_i
t+o(t),~i=1,2,3$ has all $\g_i\le 0$. To also satisfy
\eqref{Eky=0}, the first two of the $\g$'s must be zero.
\end{proof}
\begin{corollary}
For trajectories ending at the surface but not at the apex, we
would have $\g_3<0$ (exactly), as vanishing of all three
$\dot{y}_i$'s corresponds to the apex.
\end{corollary}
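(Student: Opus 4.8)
The engine is the energy constraint \eqref{consy}: its left-hand side is the kinetic energy, which vanishes on the conical surface, and it equals the potential sum $e^{y_1}+e^{y_2}+e^{y_3}$ at every instant, so a trajectory that terminates on the cone must have \emph{both} parts tending to $0$. From the vanishing of the potential I would extract $y_i\to-\infty$ for every $i$, hence $\g_i\le0$; then the degenerate, sign-restricted quadratic form obtained in the limit collapses to $\g_1=\g_2=0$. The Corollary is then a one-line consequence.

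Carried out in order: (1)~the limits $\g_i=\lim_{t\to\infty}\dot{y}_i$ exist because the $\dot{y}_i$ are the linear combinations \eqref{y2u} of the $\dot{u}_i$, which have finite limits by Proposition~\ref{limits-dot}, and $\dot{y}_i\to\g_i$ then gives $y_i=\g_i t+o(t)$. (2)~Since the trajectory approaches $3\dot{u}_1^2-\dot{u}_2^2-3\dot{u}_3^2=0$, the kinetic term tends to $0$, so by \eqref{consy} the positive sum $e^{y_1}+e^{y_2}+e^{y_3}$ tends to $0$, forcing $e^{y_i}\to0$ and hence $y_i\to-\infty$ for each $i$ --- the same mechanism already used in the proof of Proposition~\ref{singular}. (3)~Together with $y_i=\g_i t+o(t)$ this excludes $\g_i>0$, so $\g_i\le0$ for $i=1,2,3$. (4)~Passing to the limit in \eqref{consy} yields \eqref{Eky=0},
\be
\tfrac{3}{4}\g_1^2+2\g_1\g_2+\g_2^2+\g_2\g_3+\g_3\g_1=0;
\ee
every coefficient here is nonnegative (that of $\g_3^2$ being $0$), so putting $\g_i=-\beta_i$ with $\beta_i\ge0$ makes the left side a sum of five nonnegative terms equal to zero, whence each vanishes. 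The ``diagonal'' terms $\tfrac{3}{4}\beta_1^2$ and $\beta_2^2$ force $\beta_1=\beta_2=0$, i.e.\ $\g_1=\g_2=0$, while $\g_3=-\beta_3\le0$ stays free. This proves Proposition~\ref{g12=0}.

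For the Corollary I would argue by contradiction: if a trajectory ends on the surface but not at the apex while $\g_3=0$ as well, then all $\dot{y}_i\to0$, and since the linear map \eqref{y2u} is invertible all $\dot{u}_i\to0$ too --- which, by the corollary to Proposition~\ref{limits-dot} (all $\dot{u}_i\to0$ only at the apex), contradicts the hypothesis. Hence $\g_3<0$ strictly.

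The step needing the most care --- and, I expect, the genuine content, not the cone picture --- is step~(2): one must be certain that \emph{approaching} the conical surface drives the kinetic energy to $0$ as a true limit, so that the potential sum, and thus \emph{each} $e^{y_i}$, really tends to $0$; and one must notice that step~(3) can only deliver $\g_i\le0$, never $\g_i<0$, which is exactly why the proposition claims merely $\g_3\le0$ and the strict sign is deferred to the Corollary. The observation in step~(4) that all coefficients are nonnegative is elementary once seen, but it is the hinge of the argument, and it is available only in the $y_i$ variables, where the Gram matrix $M^{-1}$ is entrywise nonnegative; the corresponding $u$-form $3\dot{u}_1^2-\dot{u}_2^2-3\dot{u}_3^2$, being of mixed signature, yields nothing by itself.
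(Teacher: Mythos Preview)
Your proposal is correct and follows essentially the same route as the paper. The paper's proof of Proposition~\ref{g12=0} is terse---it simply asserts that the potential must vanish (hence $\g_i\le0$) and that ``to also satisfy \eqref{Eky=0}, the first two of the $\g$'s must be zero''---while you spell out the mechanism: the entrywise nonnegativity of $M^{-1}$ makes every term in \eqref{Eky=0} nonnegative once the signs $\g_i\le0$ are known, so the diagonal terms force $\g_1=\g_2=0$. For the Corollary itself the paper offers no separate argument beyond the parenthetical remark in its statement; your contrapositive via the invertibility of the linear map \eqref{y2u} is exactly what that remark encodes.
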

\underline{Stage 2} Further limitation on $\g_i$ follows directly
from the dynamic equations \eqref{L1qrs} in their version
expressed in terms of $y_1,~y_2$ and $y_3$.
\begin{proposition}\label{g_i=0}
The only possible asymptotic behaviour of solutions to
\eqref{L1qrs}, which satisfies constraint \eqref{consy},
corresponds to $\g_1=\g_2=\g_3=0$.
\end{proposition}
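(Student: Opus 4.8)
The plan is to rule out the only possibility left open by Proposition~\ref{g12=0}, namely $\g_3<0$, by deriving a contradiction with the fact that all $y_i\to-\infty$. So suppose, for contradiction, that $\g_1=\g_2=0$ and $\g_3<0$. Recall from the proof of Proposition~\ref{g12=0} that, as $t\to\infty$, the kinetic part of the constraint \eqref{consy} tends to $\tfrac34\g_1^2+2\g_1\g_2+\g_2^2+\g_2\g_3+\g_3\g_1=0$, so the potential part $e^{y_1}+e^{y_2}+e^{y_3}$ tends to $0$ and hence each $y_i\to-\infty$.

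Next I would isolate a single exponential on the right-hand side of the dynamic equations. Adding the first two equations of \eqref{L1qrs} (equivalently, the third component of $M^{-1}\ddot{\mathbf y}=(e^{y_1},e^{y_2},e^{y_3})^{T}$, since the third row of $M^{-1}$ is $(1,1,0)$) gives $\ddot y_1+\ddot y_2=e^{y_3}>0$. Hence $C\colonequals y_1+y_2$ is strictly convex, $\dot C$ is strictly increasing, and by Proposition~\ref{g12=0} its limit is $\g_1+\g_2=0$; consequently $\dot C(t)<0$ for every $t$, and since $\dot C$ is increasing and bounded above by $0$ we have $\dot C(t)=-\int_t^{\infty}e^{y_3(s)}\,ds$.

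Now I would convert $\g_3<0$ into exponential decay of $e^{y_3}$: since $\dot y_3\to\g_3$, there is a $T$ with $\dot y_3(s)\le\g_3/2<0$ for all $s\ge T$, so $y_3(s)\le y_3(T)+\tfrac{\g_3}{2}(s-T)$ and $e^{y_3(s)}=O\!\left(e^{\g_3 s/2}\right)$ as $s\to\infty$. Then $\int_t^{\infty}e^{y_3(s)}\,ds$ decays exponentially in $t$, so $\int_T^{\infty}\lvert\dot C(s)\rvert\,ds<\infty$, and therefore $C(t)=C(T)+\int_T^t\dot C(s)\,ds$ converges to a finite limit as $t\to\infty$. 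This is impossible, because $C=y_1+y_2\to-\infty$. The contradiction forces $\g_3=0$, and with Proposition~\ref{g12=0} we obtain $\g_1=\g_2=\g_3=0$.

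The argument is short: its only genuine idea is the choice of the combination $y_1+y_2$ — the one whose second derivative reduces to the pure exponential $e^{y_3}$ — which trades the statement ``$\dot y_3$ has a strictly negative limit'' for ``the positions $y_1+y_2$ stay bounded'', in contradiction with $y_i\to-\infty$. I expect the only step that needs a little care (and the only likely source of slips) to be the passage $\dot y_3\to\g_3<0\Rightarrow e^{y_3}$ integrable $\Rightarrow\dot C$ integrable; everything else follows at once from the convexity built into \eqref{L1qrs} and from results already established.
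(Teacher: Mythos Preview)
Your argument is correct and is essentially the same as the paper's: both isolate the combination $y_1+y_2$ via $\ddot y_1+\ddot y_2=e^{y_3}$, use $\g_3<0$ to make $e^{y_3}$ exponentially small, and integrate twice (the paper via explicit $\ep$-bounds, you via integrability of $\dot C$) to force $y_1+y_2$ to a finite limit, contradicting $y_1,y_2\to-\infty$. The only cosmetic difference is presentation; the key idea and the chain of implications are identical.
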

\begin{remark}
This means that all solutions eventually end at the apex, i.e.,
the fate of the universe is a total collapse to a point (with
reversed time -- universe starts from a point). Moreover, together
with Proposition \ref{uni-apex}, it means that the collapse is
always chaotic.
\end{remark}
 The proof is lengthy and therefore it has been put
off to Appendix B.

\subsection{Quasi-Kasner solutions}\label{quasi}

Although exact Kasner solutions do not solve the system
\eqref{L1},\eqref{L2}, numerical calculations show that
approximate Kasner-like solutions of these equations are possible
and precise. Namely, the trajectories may approach the surface of
the cone and bounce at a short distance from it, thus switching
the universe to what may be considered the next Kasner epoch. The
trajectory then passes through the interior of the cone until it
approaches another point almost on its surface, at a less negative
value of $\dot{u}_1$ (as this coordinate may only increase,
according to \eqref{for-expa}). As the cone narrows, the amplitude
of this quasi-periodic oscillations diminishes. This behaviour
corresponds to reflections from the potential walls on Misner's
diagrams \cite{MTW}, while the surfaces of the corresponding
quadrics (lower halves of the two-sheet hyperboloids)
\be\label{hyper}
3 \dot{u}_1^2-\dot{u}_2^2-3\dot{u}_3^2=\eps_n
\ee
play the role of the equipotentials. The parameter $n$ indexes a
quadric at $n$-th reflection while $\eps_n$ is a measure of its
closeness to the surface of the cone.

Since the volume scale is proportional to $\exp{\left(\tfrac32
u_1\right)}$, while the time derivative, $\dot{u}_1$, is negative
throughout the evolution, the universe becomes more compact at
subsequent reflections, although the reduction need not concern
the scales in all directions.

Apparently, the velocity components $\dot{u}_i$ seem to remain
constant for some time and the kinetic energy looks as if it were
equal to zero. A logarithmic scale is necessary to reveal the
actual variation of these quantities, as may be seen on fig. 3.
This is due to the exponential dependence of the derivatives on
the values of these variables.

\begin{figure}\label{ui}
\begin{center}
\includegraphics[width=0.8\textwidth]{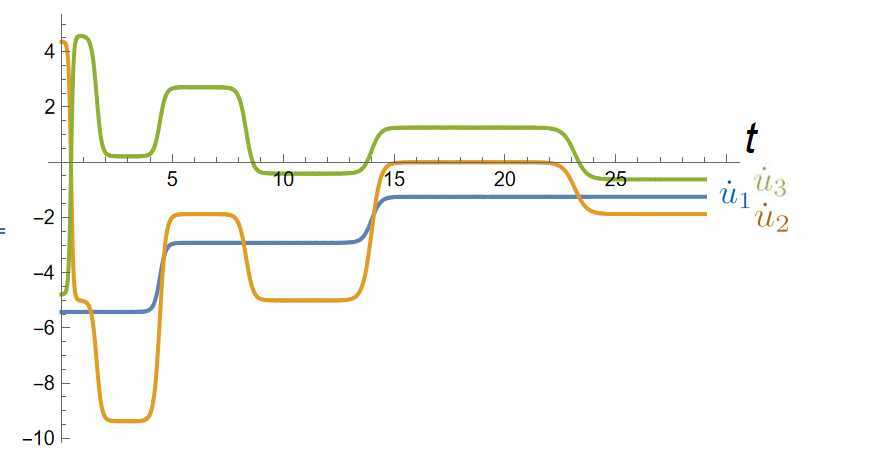}
\caption{Three components of the velocity $\dot{u}_1,\,\dot{u}_2$
and $\dot{u}_3$ as functions of time parameter $t$. Each of them
has time intervals of apparently constant values and there are
intervals in which all three seem to be constant. Revealing their
variability requires a logarithmic scale, as seen in the next
figure.}
\end{center}
\end{figure}

\begin{figure}\label{kin-en}
\begin{center}
\includegraphics[width=0.8\textwidth]{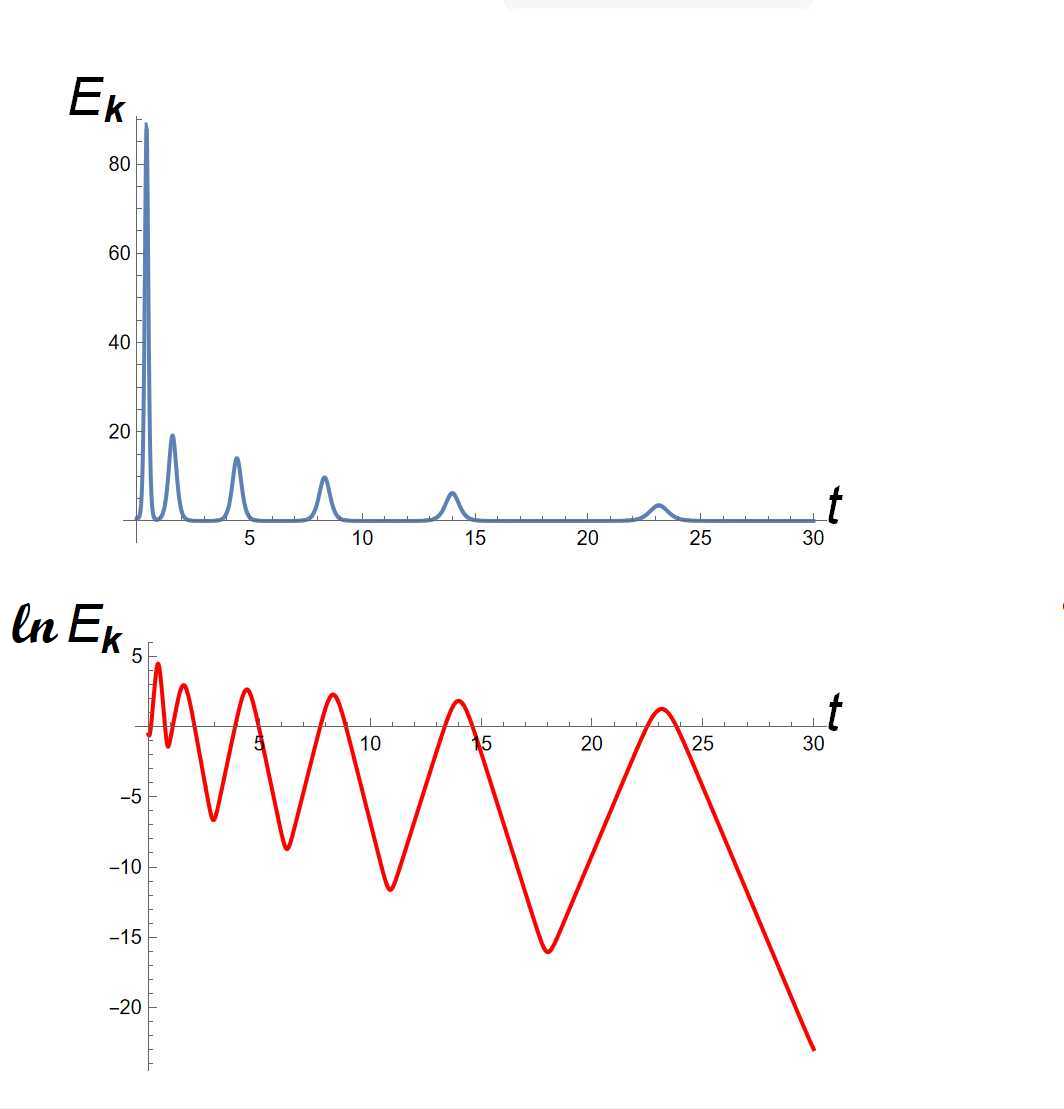}
\caption{The kinetic energy as a function of the time parameter
$t$. In the upper graph, apparently, $E_k$ systematically reaches
zero corresponding to the surface of the cone, and stays at this
level for a long time, but the logarithmic scale in the lower
graph reveals its oscillatory behaviour with reflections from the
hyperboloidal surfaces \eqref{hyper}.}
\end{center}
\end{figure}

\section{Conclusions}
\bi
\item
We have analysed details of the possible dynamics of the universe
while it undergoes contraction in the time parameter $t$,
according to the BKL equations \eqref{L1}, \eqref{L2}. Due to
reversibility of the equations, this may provide info on both
directions of the evolution for our universe. However, this
description is purely classic; it does not include quantum effects
like inflation.
\item
The asymptotics $t\to\infty$ may be oscillatory, but (unlike
predicted in \cite{P-proc}), limits at $t\to\infty$ exist for all
scale factors, their derivatives and first logarithmic
derivatives. However
\item
The reflections from the potential walls in the Misner's picture,
corresponding to reflections from a surface of a hyperboloid in
our picture, eventually lead to all-direction collapse of the
universe, due to the result of Proposition \ref{g_i=0}.
\item
The only asymptotic at $t\to\infty$ of this total collapse in
which the approach occurs along a differentiable path, it is the
exact solution \eqref{solution} (see Proposition \ref{uni-apex}).
This shows exceptional role of the exact solution, in spite of its
instability.
\item
The instability of the exact solution on the anisotropic approach
to collapse relies on growth of the ratio between the perturbation
and unperturbed scale factors (rather than the growth of the
perturbation itself). This conclusion was already presented in
\cite{GP}. Reversing the time arrow, we obtain the conclusion that
the universe would undergo isotropisation in a stable process.
\item
The final point at $t\to\infty$ is always the apex. Since the
exact solution is unstable, it cannot be an attractor. Bearing in
mind that the only approach to the apex along a differentiable
curve is that of the exact solution, we come up to the conclusion
that the generic approach to the limit has no limit of the ratios
between the scale factors or their derivatives. This infers that
it is chaotic.
\item
The cone in the velocity space, corresponding to zero kinetic
energy, together with the hyperboloids of constant $E_k$ has
proved to be a natural and useful tool for analysing the BKL
equations. This is likely to extend to other Lagrangian systems
with quadratic kinetic part.
\ei

\begin{appendices}

\section{Proof that the exact solution is the only differentiable approach to all-direction
collapse} (proof of Proposition \ref{uni-apex})
\begin{proof}
It is convenient to express \eqref{lim-apex} in terms of the $y_i$
variables, $i=1,2,3$ \eqref{y2u}, as each of them, as well as
their derivatives, have well-defined limits at the apex, namely
$y_i\to -\infty, \dot{y}_i\to 0$. For $i=2,3$, we assume existence
of two limits which define the paths of approach
\be\label{gii}
g_{i1}\colonequals \lim_{t\to\infty}
y_i/y_1=\lim_{t\to\infty}\dot{y}_i/\dot{y}_1=
\lim_{t\to\infty}\ddot{y}_i/\ddot{y}_1
\ee
(by de l'H\^opital's rule). The last pair of expression for the
limits define the direction of approach to the apex in the cone.
We are going to prove that their existence implies that the
approach to the apex is asymptotically identical with that of the
exact solution. In terms of the $y_i$ variables the latter reads
\be\label{sol-y}
y_1=\ln\frac{9}{(t-t_0)^2},\quad y_2=\ln\frac{10}{(t-t_0)^2},\quad
y_3=\ln\frac{4}{(t-t_0)^2}.
\ee

From equations \eqref{L1qrs}--\eqref{qrs2y}, for $i=2,3$, we have
\begin{align}\label{limitsy}
0\le
g_{21}=\lim_{t\to\infty}\frac{y_2}{y_1}=&\lim_{t\to\infty}\frac{\ddot{y}_2}{\ddot{y}_1}=\lim_{t\to\infty}\frac{2\exp{y_1}-2\exp{y_2}+\exp{y_3}}{2\exp{y_2}-2\exp{y_1}},
\nn\\
0\le
g_{31}=\lim_{t\to\infty}\frac{y_3}{y_1}=&\lim_{t\to\infty}\frac{\ddot{y}_3}{\ddot{y}_1}=\lim_{t\to\infty}\frac{\exp
{y_2}-2\exp{y_3}}{2\exp{y_2}-2\exp{y_1}}.
\end{align}
The nonnegative property of $g_{21}$ and $g_{31}$ follows from
$\forall_{i\in \{1,2,3\}}y_i\to-\infty$.

Consider the limit $\lim_{t\to\infty}2y_2/y_1+y_3/y_1$, first for
$g_{21}<1$
\be
0\le g_{213}\colonequals
2g_{21}+g_{31}=\lim_{t\to\infty}\frac{-3\exp{y_2}+4\exp{y_1}}{2\exp{y_2}-2\exp{y_1}}=\lim_{t\to\infty}\frac{-3+4\exp{[y_1(1-y_2/y_1)]}}{2-2\exp{y_1(1-y_2/y_1)}}=-\frac32.
\ee
If $g_{21}>1$, then, by an analogous transformation, $g_{213}=-2$.
Both results contradict the nonnegative property. Hence,
$g_{21}=1$.

For $g_{32}\colonequals \lim_{t\to\infty} y_3/y_2$, consider the
sum of reciprocals
\be\label{g231}
0\le g_{231}\colonequals
\frac{1}{g_{21}}+\frac{1}{g_{31}}=\lim_{t\to\infty}\frac{\exp{y_3}}{\exp{y_2}-2\exp{y_3}}.
\ee
If $g_{32} <1$, then, by a similar transformation,
$g_{231}\!\colonequals {1/g_{32}+1/g_{31}=-\frac12}$, which
contradicts the nonnegative property.

If $g_{32}>1$, then we obtain in a similar way, $g_{231}=0$.
Though consistent with the nonnegative property, this is
impossible since $g_{21}=1$ and $g_{31}\ge 0$ As a result, both
$g_{21}$ and $g_{31}$ must be equal to $1$. Substituting these
limits to \eqref{limitsy}, we obtain, after simple manipulation
\be\label{intermediate}
\lim_{t\to\infty}\frac{\exp y_3}{\exp y_2-\exp y_1}=4~\text{ and
}~\lim_{t\to\infty}\frac{\exp y_2}{\exp y_2-\exp y_1}=10,
\ee
which entails
\be\label{lim-dif-y}
\lim_{t\to\infty}(y_2-y_1)= \ln \frac{10}{9},\quad
\lim_{t\to\infty}(y_3-y_1)= \ln \frac49.
\ee
Finally, the asymptotic time dependence may be recovered from
\eqref{length}, which in terms of $y_i$ has the form
\be\label{lengthy}
\frac34\dot{y}_1^2+2\dot{y}_1\dot{y}_2+\dot{y}_2^2+\dot{y}_1\dot{y}_3+\dot{y}_2\dot{y}_3-\frac92\ddot{y}_1-5\ddot{y}_2-2\ddot{y}_3=0.
\ee
Dividing both sides of \eqref{lengthy} by $\dot{y}_1^2$ and
bearing in mind that all quotients $\dot{y}_i/\dot{y}_1$ and
$\ddot{y_i}/\ddot{y_1}$ tend to 1, we obtain the asymptotic, which
may be written as
\be
\lim_{t\to\infty}\frac{d}{dt}\left(\frac{1}{\dot{y}_1}\right)=-\frac12,
\ee
Integrating, we get the asymptotic of $y_1$ in the neighbourhood
of $t=\infty$
\be\label{y1}
\dot{y}_1= -2/(t-t_0),\qquad y_1= \ln \frac{C}{(t-t_0)^2}.
\ee
While the value of $t_0$ is arbitrary, the value of $C$ may be
recovered by substitution of \eqref{y1} into the constraint
\eqref{consy}, which yields $C=9$. Subsequent substitutions, of
this $C$ into \eqref{y1}, and the resulting $y_1$ into
\eqref{lim-dif-y}, yield precisely the asymptotic of the exact
solution.
\end{proof}
\section{Proof that all trajectories eventually tend to the apex}
(proof of Proposition \ref{g_i=0})
\begin{proof}
In Proposition \ref{g12=0}, we proved $\g_1=\g_2=0$, $\g_3\le 0$.
We are going to prove that $\g_3<0$ is impossible.

Assume $\g_3<0$. Adding first two equations in the right equation
of \eqref{L1qrs} (which corresponds to adding first two rows of
matrix $M$), we obtain
\be\label{y12}
\ddot{y}_1+\ddot{y}_2=\exp{y_3}.
\ee
Since  $\lim_{t\to\infty}\dot{y}_3=\g_3<0$, then for all $\ep>0$ a
time $T$ exists such that for all $t>T$, we have
\be
\dot{y}_3\in\, ]\g_3-\ep,~\g_3+\ep[,~ \text{ whence
}~y_3-y_3(T)\in \,](\g_3-\ep)(t-T),~(\g_3+\ep)(t-T)[\,.
\ee
Choose $\ep$ such that $\g_3+\ep<0$. We have
\be
\ddot{y}_1+\ddot{y}_2\in \left.
\right]e^{y_3(T)+(\g_3-\ep)(t-T)},~e^{y_3(T)+(\g_3+\ep)(t-T)}\left[
\right.\,,
\ee
with both exponents negative for large $t$. Hence, for these $t$
\be\label{dot-ineq}
\dot{y}_1+\dot{y}_2\in \left.
\right]\frac{1}{\g_3-\ep}e^{y_3(T)+(\g_3-\ep)(t-T)}+C_1,~\frac{1}{\g_3+\ep}e^{y_3(T)+(\g_3+\ep)(t-T)}+C_1\left[
\right.~,
\ee
where $C_1$ is a constant of integration. Since
$\g_1=\lim_{t\to\infty}\dot{y}_1=0$ and
$\g_2=\lim_{t\to\infty}\dot{y}_2=0$ as $t\to\infty$ (Proposition
\ref{g12=0}), we have $C_1=0$. Then, integrating again
\eqref{dot-ineq}, we obtain
\be
{y}_1+{y}_2\in \left.
\right]\frac{1}{(\g_3-\ep)^2}e^{y_3(T)+(\g_3-\ep)(t-T)}+C_2,~\frac{1}{(\g_3+\ep)^2}e^{y_3(T)+(\g_3+\ep)(t-T)}+C_2\left[
\right.~,
\ee
where $C_2$ is a constant of the next integration. However the
constraint \eqref{consy} requires that both ${y}_1\to -\infty$ and
${y}_2\to -\infty$ as $t\to\infty$, while the limit of the r.h.s.
is a finite number $C_2$. Hence, the assumption $\g_3<0$ leads to
a contradiction, whence $\g_3=0$.

The conclusion that all $\g_i,~~i=1,2,3$ are equal to zero means
that all trajectories eventually end at the apex.
\end{proof}
\end{appendices}


\begin{thebibliography}{99}
\bibitem{LK}
E. M. Lifshitz and I. M. Khalatnikov, ``Investigations in
relativistic cosmology'', \textit{Advances in Physics}
\textbf{12}, Part 46, 185 (1963)
https://doi.org/10.1080/00018736300101283
\bibitem{Penrose}
R. Penrose, ``Gravitational collapse and space-time
singularities'' \textit{Phys. Rev. Lett.} \textbf{14}, 57 (1965)
https://doi.org/10.1103/PhysRevLett.14.57
\bibitem{Hawking}
S. W. Hawking, ``Occurrence of singularities in open universes''
\textbf{15}, 689 (1965)
https://doi.org/10.1103/PhysRevLett.15.689.
\bibitem{BKL}
V. A. Belinskii, I. M. Khalatnikov, and E. M. Lifshitz,
``Oscillatory approach to a singular point in the relativistic
cosmology'',
  Adv. Phys. {\bf 19}, 525 (1970) https://doi.org/10.1080/00018737000101171
\bibitem{MTW}
Ch.W. Misner, K.S. Thorne and J.A. Wheeler, Gravitation (W.H.
Freeman \& Co., San Francisco 1973), \S 30.2 and \S 30.7
\bibitem{Ryan} V. A. Belinskii, I. M. Khalatnikov, and M. P. Ryan,
``The oscillatory regime near the singularity in Bianchi-type IX
universes'',  Preprint {\bf 469} (1971), Landau Institute for
Theoretical Physics, Moscow; the part by V. A. Belinskii and I. M.
Khalatnikov has been published as sections 1 and 2 in M. P. Ryan,
Ann. Phys. {\bf 70},  301 (1972)
https://doi.org/10.1016/0003-4916(72)90269-2
\bibitem{BKL3} V. A. Belinskii, I. M. Khalatnikov, and E. M. Lifshitz,
``A  general solution of the Einstein equations with a time
singularity'',
  Adv. Phys.  {\bf 31}, 639 (1982) https://doi.org/10.1080/00018738200101428
\bibitem{Belinski:2014kba} V.~A.~Belinski,
  ``On the cosmological singularity,''
Int.\ J.\ Mod.\ Phys.\ D {\bf 23}, 1430016 (2014)
https://doi.org/10.1142/S021827181430016X
\bibitem{CP}
E. Czuchry and W. Piechocki ``Bianchi IX model: Reducing phase
space'' \textit{Phys. Rev. D} {\bf 87}, 084021 (2013)
https://doi.org/10.1103/PhysRevD.87.084021
\bibitem{CMX} E.~Czuchry, N.~Kwidzinski and W.~Piechocki ``Comparing the dynamics of diagonal and general Bianchi IX
spacetime'', \textit{Eur. Phys. J. C} {\bf 79}:173 (2019)
https://doi.org/10.1140/epjc/s10052-019-6690-y
\bibitem{book} V.~Belinski and M.~Henneaux,
{\em The Cosmological Singularity} (Cambridge University Press,
Cambridge, 2017) https://doi.org/10.1017/9781107239333
\bibitem{GP}
P. Goldstein and W. Piechocki, ``Generic instability of the
dynamics underlying the Belinski-Khalatnikov-Lifshitz scenario''
\textit{Eur. Phys. J. C} {\bf 82}:216 (2022)
https://doi.org/10.1140/epjc/s10052-022-10158-7
\bibitem{Conte}
R. Conte, ``On a dynamical system linked to the BKL scenario'',
Phys. Scr. {\bf 98} 105212 (2023)
https://doi.org/10.1088/1402-4896/acf4d3
\bibitem{BKL69}
V. A. Belinski and I. M. Khalatnikov, ``On the nature of the
singularities in the general solution of the gravitational
equations'', \textit{Sov. Phys. JETP} \textbf{29}, 911 (1969);
translation from Russian of \textit{Zh. Eksp. Teor. Fiz.}
\textbf{56}, 1701 (1969)
\bibitem{P-proc}
P.P. Goldstein ``A Quadric of Kinetic Energy in the Role of Phase
Diagrams -- Application to the Belinski-Khalatnikov-Lifshitz
Scenario'', Geometric Methods in Physics XL Workshop
Bia{\l}owie{\.z}a, Poland, P. Kielanowski et al. eds.
(Birkh{\"a}user 2024), p. 331
https://doi.org/10.1007/978-3-031-62407-0

\end{thebibliography}
\end{document}